

\documentclass[journal]{IEEEtran}

\usepackage{amsmath,amssymb,amsthm}
\usepackage{epsfig,graphicx}
\usepackage[svgnames,rgb]{xcolor}
\usepackage[color=yellow]{pdfcomment}
\usepackage{colortbl,hhline}
\usepackage{algorithmic,algorithm,lipsum}
\usepackage{graphicx}
\usepackage{caption}
\usepackage{subcaption}

\usepackage[mathscr]{eucal}
%
\usepackage{amsbsy}
%
\usepackage{bm}
\usepackage{fixltx2e}
\MakeRobust{\overrightarrow}
%
%


\newcommand{\Sec}[1]{\hyperref[sec:#1]{\S\ref*{sec:#1}}} 
\newcommand{\Eqn}[1]{\hyperref[eq:#1]{(\ref*{eq:#1})}} 
\newcommand{\Fig}[1]{\hyperref[fig:#1]{Figure~\ref*{fig:#1}}} 
\newcommand{\Tab}[1]{\hyperref[tab:#1]{Table~\ref*{tab:#1}}} 
\newcommand{\Thm}[1]{\hyperref[thm:#1]{Theorem~\ref*{thm:#1}}} 
\newcommand{\Lem}[1]{\hyperref[lem:#1]{Lemma~\ref*{lem:#1}}} 
\newcommand{\Prop}[1]{\hyperref[prop:#1]{Property~\ref*{prop:#1}}} 
\newcommand{\Cor}[1]{\hyperref[cor:#1]{Corollary~\ref*{cor:#1}}} 
\newcommand{\Def}[1]{\hyperref[def:#1]{Definition~\ref*{def:#1}}} 
\newcommand{\Alg}[1]{\hyperref[alg:#1]{Algorithm~\ref*{alg:#1}}} 
\newcommand{\Ex}[1]{\hyperref[ex:#1]{Example~\ref*{ex:#1}}} 






\newcommand{\V}[1]{{\bm{\mathbf{\MakeLowercase{#1}}}}} 
\newcommand{\mb}[1]{\mathbb{#1}}
\newcommand{\mc}[1]{\mathcal{#1}}

\newcommand{\M}[1]{{\bm{\mathbf{\MakeUppercase{#1}}}}} 






\title{On Deterministic Conditions for Subspace Clustering under Missing Data}
\author{Wenqi Wang, Shuchin Aeron and Vaneet Aggarwal\thanks{W. Wang and V. Aggarwal are with the School of Industrial Engineering, Purdue University, West Lafayette, IN, 47907, email: \{wang2041,vaneet\}@purdue.edu. S. Aeron is with the Dept. of Electrical and Computer Engineering, Tufts University, Medford, MA 02155, email: shuchin@ece.tufts.edu

This work was presented in part at the IEEE International Symposium on Information Theory, Spain in July 2016.

Shuchin Aeron acknowledges the support from NSF grant CCF:1319653. 
 }}


\newtheorem{definition}{Definition}

\newtheorem{corollary}{Corollary}
\newtheorem{theorem}{Theorem}

\newcommand{\oil}{\Omega_{i}^{(\ell)}}
\newcommand{\oi}{\Omega_i}

\newcommand{\uell}{\M{U}^{(\ell)}}

\newcommand{\vell}{\M{V}^{(\ell)}}
\newcommand{\xell}{\M{X}^{(\ell)}}
\newcommand{\Aell}{\M{A}^{(\ell)}}

\begin{document}
\pagenumbering{gobble}

\maketitle

\begin{abstract}
In this paper we present deterministic conditions for success of sparse subspace clustering (SSC) under missing data, when data is assumed to come from a Union of Subspaces (UoS) model. We consider two algorithms, which are variants of SSC with entry-wise zero-filling that differ in terms of the optimization problems used to find affinity matrix for spectral clustering. For both the algorithms, we provide deterministic conditions for any pattern of missing data such that perfect clustering can be achieved.
We provide extensive sets of simulation results for clustering as well as completion of data at missing entries, under the UoS model. Our experimental results indicate that in contrast to the full data case, accurate clustering does not imply accurate subspace identification and completion, indicating the natural order of relative hardness of these problems.
\end{abstract}

\begin{IEEEkeywords}
	Subspace Clustering, Missing Data, Union of Subspaces, Sparse Subspace Clustering, Deterministic Conditions
	\end{IEEEkeywords}
\section{Introduction}

In this paper we consider the problem of data clustering under the union of subspaces (UOS) model \cite{soltanolkotabi2012,Elhamifar:2012uz}, also referred to as a subspace arrangement \cite{Ma_SIAM2008, TsakirisV15}, when each data vector is sub-sampled. This is referred to as the case of \emph{missing data}. In other words we are looking to harvest a union of subspaces structure from the data, when the data is missing. Such a problem has been recently considered in a number of papers \cite{bal2,bal3,ErikssonArXiv2011,Yang2015,pimentelgroup}. This setting has implications to data completion under the union of subspaces model in contrast to the single subspace model that has been prevalent in the matrix completion literature. In contrast to statistical analysis in \cite{bal2,bal3,ErikssonArXiv2011}, this paper uses a variant of the sparse subspace clustering (SSC) algorithm \cite{Elhamifar:2012uz} to give sufficient deterministic conditions for accurate subspace clustering under missing data.

We consider two algorithms in this paper. Both algorithms are based on  Sparse Subspace Clustering (SSC) with entry-wise zero-filling (EWZF). The first algorithm (SSC-EWZF) represents each entry-wise zero-filled data point as a sparse linear combination of other entry-wise zero-filled data points. These coefficients are then used to do spectral clustering \cite{Ng01onspectral} which gives the desired clusters. The second algorithm (SSC-EWZF-OO) represents observed entries of each data point as a sparse linear combination of other entry-wise zero-filled data points when projected onto the observed entries of the represented data point. Thus, the overlapping observations (OO) are used to determine the sparse linear combination. The first algorithm tries to match the zero filled data point when representing with the non-zero points in the other vectors which might result in inaccuracies. On the other hand the second algorithm matches the data only at its sampled locations and is therefore more robust.
In this paper, we derive deterministic conditions for any sampling pattern for these algorithms to succeed in producing the correct clustering.  We note that the conditions for correct clustering are simple and readily interpretable for the case when all the data points are sampled at exactly the same locations and directly reduce to the  results in \cite{dim_red_sc_ext} when data is fully sampled. 



We numerically compare the performance of the proposed algorithms with two other algorithms, SSC-EWZF-OO-LASSO \cite{Yang2015}, which is a LASSO version of SSC-EWZF-OO, and zero-filled version of Thresholded Subspace Clustering (TSC) \cite{tsc2014} denoted by TSC-EWZF. 
Our proposed algorithm SSC-EWZF-OO is comparable with SSC-EWZF-OO-LASSO algorithm and shows the best accuracy in the numerical results. 
In contrast, SSC-EWZF algorithm and TSC-EWZF algorithm perform worse than the other two algorithms.
Further we consider the error in subspace recovery after clustering the data points. To the best of our knowledge this is the first time it is demonstrated that accurate clustering under missing data does not imply accurate \emph{subspace identification} and \emph{data completion} thereby indicating the natural order of hardness of these problems under missing data.

The rest of the paper is organized as follows. Section II describes the problem of subspace clustering under missing data. Two algorithms considered for solving the problem are outlined in Section III. Section IV analyzes the deterministic conditions for any given sampling patterns when each of the algorithm gives correct clustering results. These deterministic conditions are specialized to the case of  when all the data points are sampled at the same locations and no missing data. Numerical results are provided in Section V, and Section VI concludes the paper. All proofs are provided in the Appendix.

\section{Problem set-up}
\label{sec:prob_setup}

We are given a set of data points collected as columns of a matrix $\M{X} \in \mathbb{R}^{n\times N}$, i.e. $\M{X}_{i} \in \mathbb{R}^n, i = 1,2,...,N$ from union of $L$ subspaces such that, $\M{X}_i \in \bigcup_{\ell =1}^{L} \mb{S}^{(\ell)}$, where $\mb{S}^{(\ell)}$ is a subspace of dimension $d_\ell$ in $\mb{R}^{n}$, for $\ell = 1,2,...,L$. 
We let $d$  be the maximum of the dimension of $L$ subspaces, or $d \triangleq \max_{\ell} d_\ell$.
Each data point $\M{X}_i$ is sampled at $\Omega_i$ co-ordinates (randomly or deterministically), denoted as ${\bf X}_{\Omega_i}$.
In order to derive meaningful performance guarantees for the proposed algorithm, we consider the following generative model for the data. 
Let $\xell \in \mathbb{R}^{n\times N_\ell}$ denote the set of vectors in $\M{X}$ which belong to subspace $\ell$. 
Let $$ \xell = \uell \Aell ,$$ where the $N_\ell$ columns of $\Aell \in \mathbb{R}^{d_\ell \times N_\ell}$ are drawn from the unit sphere $\mc{S}^{d_\ell-1}$ and $\uell \in \mathbb{R}^{n\times d_\ell}$ is a matrix with orthonormal columns, whose columns span the subspace, $\mb{S}^{(\ell)}$.  
Let ${\bf a}_i^{(\ell)} \in \mathbb{R}^{d_\ell \times 1}$ be the $i_\text{th}$ column of ${\bf A}^{(\ell)}$. Then under missing data, point $\xell_{\oi}$ is the $i_\text{th}$ data from $\mathbb{S}^{(\ell)}$ sampled at locations $\oil$, denoted as
\begin{align} 
\xell_{\oi} = \mathbf{I}_{\oil} \uell \V{a}_{i}^{(\ell)}
\end{align}
where $\mathbf{I}_{\oil}$ is a diagonal matrix with $\mathbf{I}_{\oil}(k,k) = 1$ iff $ k \in \oil$. It is essentially a zero filled $\M{X}_{i}^{(\ell)}$ at missing entires.  \\

Given this set-up the problem is to accurately cluster the data points such that within each cluster the data points belong to the same (original) subspace. \\

In the following, we will use lowercase boldface letters to represent column vectors and uppercase boldface to designate matrices. The superscript $^\top$ denotes conjugate transpose and $^\dagger$ denotes the Moore-Penrose pseudo-inverse. If ${\bf V}$ has full column rank, then ${\bf V}^\dagger$ is a left inverse of ${\bf V}$, expressed as ${\bf V}^\dagger = ({\bf V}^\top {\bf V})^{-1} {\bf V}^\top$ and if
${\bf V}$ has full row rank, then ${\bf V}^\dagger$ is a right inverse of ${\bf V}$, expressed as ${\bf V}^\dagger = {\bf V}^\top ({\bf V}^\top {\bf V})^{-1} $. 

In addition, the following notations  will be extensively used in our analysis. 
\begin{enumerate}
\setlength \itemsep{3pt}
\item Let $\M{V}_{\oi}^{(\ell)} \triangleq \M{I}_{\oil} \M{U}^{(\ell)}$ denote the  truncated basis of ${\bf X}_{\Omega_i}^{(\ell)}$. The singular value decomposition (SVD) of $\M{V}_{\oi}^{(\ell)}$ is given as  $\M{V}_{\oi}^{(\ell)} = \M{Q}_{i}^{(\ell)} {\bf \Sigma}_{i}^{(\ell)} \M{R}_{i}^{(\ell) \top}$, where 
${\bf Q}_i^{(\ell)} \in \mathbb{R}^{n\times n}, {\bf \Sigma}_i^{(\ell)} \in \mathbb{R}^{n \times d}, {\bf R}_i^{(\ell)} \in \mathbb{R}^{d\times d}$. 
Thus,
\begin{equation} \label{eq: X_omega}
{\bf X}_{\Omega_i}^{(\ell)} = \M{Q}_{i}^{(\ell)} \Sigma_{i}^{(\ell)} \M{R}_{i}^{(\ell) \top} {\bf a}_i^{(\ell)}.
\end{equation}
\item Let 
\begin{equation} \label{eq: base}
\tilde{{\bf a}}_j^{(k)} \triangleq  ({\bf Q}_i^{(\ell)})^\top {\bf I}_{\Omega_i^{(\ell)}} {\bf I}_{\Omega_j^{(k)}} {\bf U}^{(k)}  {\bf a}_j^{(k)}.
\end{equation}

In a special case  when $j=i$ and $k=l$, we have
\begin{equation} \label{eq: notation_a}
\begin{split}
\tilde{{\bf a}}_i^{(\ell)}  
& = ({\bf Q}_i^{(\ell)})^\top {\bf I}_{\Omega_i^{(\ell)}} {\bf I}_{\Omega_i^{(\ell)}} {\bf U}^{(\ell)}  {\bf a}_i^{(\ell)} \\
& = ({\bf Q}_i^{(\ell)})^\top {\bf I}_{\Omega_i^{(\ell)}} {\bf U}^{(\ell)}  {\bf a}_i^{(\ell)} \\
& = ({\bf Q}_i^{(\ell)})^\top {\bf Q}_i^{(\ell)} {\bf \Sigma}_i^{(\ell)}({\bf R}_i^{(\ell)})^\top {\bf a}_i^{(\ell)} \\
& = {\bf \Sigma}_i^{(\ell)}({\bf R}_i^{(\ell)})^\top {\bf a}_i^{(\ell)}.
\end{split}
\end{equation}

Further, let
\begin{equation} \label{eq: base_groups}
\tilde{\M{A}}_{-i}^{(\ell)} \triangleq [\tilde{{\bf a}}_1^{(\ell)} , \tilde{{\bf a}}_2^{(\ell)} ,..., \tilde{{\bf a}}_{i-1}^{(\ell)} , \tilde{{\bf a}}_{i+1}^{(\ell)} ,..., \tilde{{\bf a}}_{N_l}^{(\ell)} ],
\end{equation}
be  $n \times (N_\ell-1)$ matrices with columns as $\tilde{\V{a}}_{j}^{(\ell)},  j \neq i$ and 
\begin{equation}
{\M{A}}_{-i}^{(\ell)} \triangleq [{{\bf a}}_1^{(\ell)} , {{\bf a}}_2^{(\ell)} ,..., {{\bf a}}_{i-1}^{(\ell)} , {{\bf a}}_{i+1}^{(\ell)} ,..., {{\bf a}}_{N_l}^{(\ell)} ],
\end{equation}
be $d \times  (N_\ell-1)$ matrices with columns as ${\V{a}}_{j}^{(\ell)}, j \neq i$.
\end{enumerate}

We now introduce several geometric definitions that are used to state the main results. 


\begin{definition}[Centro-Symmetric Polytope] 
For any matrix ${\bf P} \in \mathbb{R}^{n \times N}$, $\mc{P}({\bf P})$ denotes the centro-symmetric polytope defined as, $\mc{P}({\bf P}) = \text {conv} (\pm {\bf p}_1, \pm {\bf p}_2,..., \pm {\bf p}_N)$, where $\text{conv}(\cdot)$ denotes the convex hull operation of the points in the argument.
\end{definition}

\begin{definition}[Inradius \cite{gritzmann1992inner}]The in-radius of $\mc{P}$, denoted as $r(\mc{P})$, is defined as the radius of the largest Euclidean ball that can be inscribed in $\mc{P}$.
	\end{definition}
	
\begin{definition}[Circumradius \cite{gritzmann1992inner}] 
The circumradius of $\mc{P}$, denoted as $R(\mc{P})$, is defined as the radius of the smallest Euclidean ball that contains $\mc{P}$.
\end{definition}

\begin{definition}[Polar Set \cite{boyd2004convex}]
The polar set of $\mc{P}({\bf P})$ is given by,
\begin{equation}
\mc{P}^o({\bf P}) = \{ {\bf z:} \|{\bf P} ^\top z\|_\infty \leq 1\}.
\end{equation}
\end{definition}

\section{Algorithm}

We present two algorithms based on entry-wise zero-filling variants of the Sparse Subspace Clustering  (SSC) algorithm \cite{vidal}. These algorithms essentially form an affinity matrix between data points based on finding sparse self-representation of the data. This affinity matrix is then subsequently used for Spectral Clustering \cite{Ng01onspectral} to find the clusters.


The first algorithm, denoted SSC-EWZF, zero-fills the missing entries and applies the SSC algorithm. The detailed steps are given as follows.  

\noindent {\bf Algorithm SSC-EWZF:}
\begin{enumerate}
\item For each $i$ solve for 
\begin{equation} \label{eq: algo}
\arg \min \|\V{c}_{i}\|_1: \,\, \M{X}_{\oi} =  \M{X}_{-i,\Omega} \V{c}_{i},
\end{equation}
where $\M{X}_{\oi}$ denotes the data point $\M{X}_i$ with zeros filling at non-sampled locations and $\M{X}_{-i,\Omega}$ denotes the zero-filled data points except the $i$ data point. 
\item Collect the $\V{c}_i$ into a matrix $\M{C}$ and apply spectral clustering to $\M{A} = |\M{C}| + |\M{C}|^\top$
\end{enumerate}

The second algorithm that we consider, denoted  SSC-EWZF-OO,  represents observed entries of each data point as a sparse linear combination of other entry-wise zero-filled data points when projected onto the observed entries of the represented data point. The detailed steps are given as follows.

\noindent {\bf Algorithm SSC-EWZF-OO:}
\begin{enumerate}
\item For each $i$ solve for 
\begin{equation} \label{eq: algo}
\arg \min \|\V{c}_{i}\|_1: \,\, \M{X}_{\oi} = \mathbf{I}_{\oi} \M{X}_{-i,\Omega} \V{c}_{i}
\end{equation}
where $\M{X}_{\oi}$ denotes the data point $\M{X}_i$ with zeros filling at non-sampled locations and $\M{X}_{-i,\Omega}$ denotes the zero-filled data points except the $i$ data point. 
\item Collect the $\V{c}_i$ into a matrix $\M{C}$ and apply spectral clustering  to $\M{A} = |\M{C}| + |\M{C}|^\top$
\end{enumerate}

In the following we will analyze these two algorithms to derive conditions under which, the points are correctly clustered.

	
	

\section{Analysis of the Algorithms}
We begin by noting the following result for SSC-EWZF-OO.



\subsection{Deterministic Conditions for SSC-EWZF-OO}

The following theorem provides the deterministic conditions for subspace clustering under missing data when SSC-EWZF-OO is used. 
{\theorem \label{theorem}
Let $|\Omega_i^{(\ell)}| \geq d$. Then SSC-EWZF-OO leads to correct clustering if for all $i\in [N_\ell], k\neq \ell$, the following holds
\begin{equation}\label{eq:theorem_eq}
\left| 
\frac{({\V \lambda}_i^{(\ell)})^\top}{\|({\V \lambda}_i^{(\ell)})^\top\|_2} ({\bf Q}_i^{(\ell)})^\top {\bf I}_{\Omega_i^{(\ell)}} {\bf I}_{\Omega_j^{(k)}} {\bf U}^{(k)} {\bf a}_j^{(k)}
\right|
< 
r(\mc{P}(\tilde{{\bf A}}_{-i}^{(\ell)}))
\end{equation}
where\\
(1) ${\V \lambda}_i^{(\ell)} \in \arg \max_{\V \lambda} \langle \tilde{{\bf a}}_i^{(\ell)}, {\V \lambda} \rangle \quad s.t. \quad \|(\tilde{{\bf A}}_{-i}^{(\ell)})^\top {\V \lambda}\|_\infty \leq 1$\\
(2) $r(\mc{P}(\tilde{{\bf A}}_{-i}^{(\ell)}))$ is the in-radius of $\mc{P}(\tilde{{\bf A}}_{-i}^{(\ell)})$. 
\endtheorem
}
\begin{proof}
	The proof is provided in Appendix \ref{apdx:oo}.
\end{proof}

The in-radius { $r(\mc{P}(\tilde{\M{A}}_{-i}^{(\ell)}))$}  depends on the sampling patterns within the subspace and is also dependent on the particular sampling pattern for the $i$-th data point. 



\begin{figure}[htbp]
\centering \makebox[0in]{
    \begin{tabular}{c c}
      \includegraphics[scale=0.45]{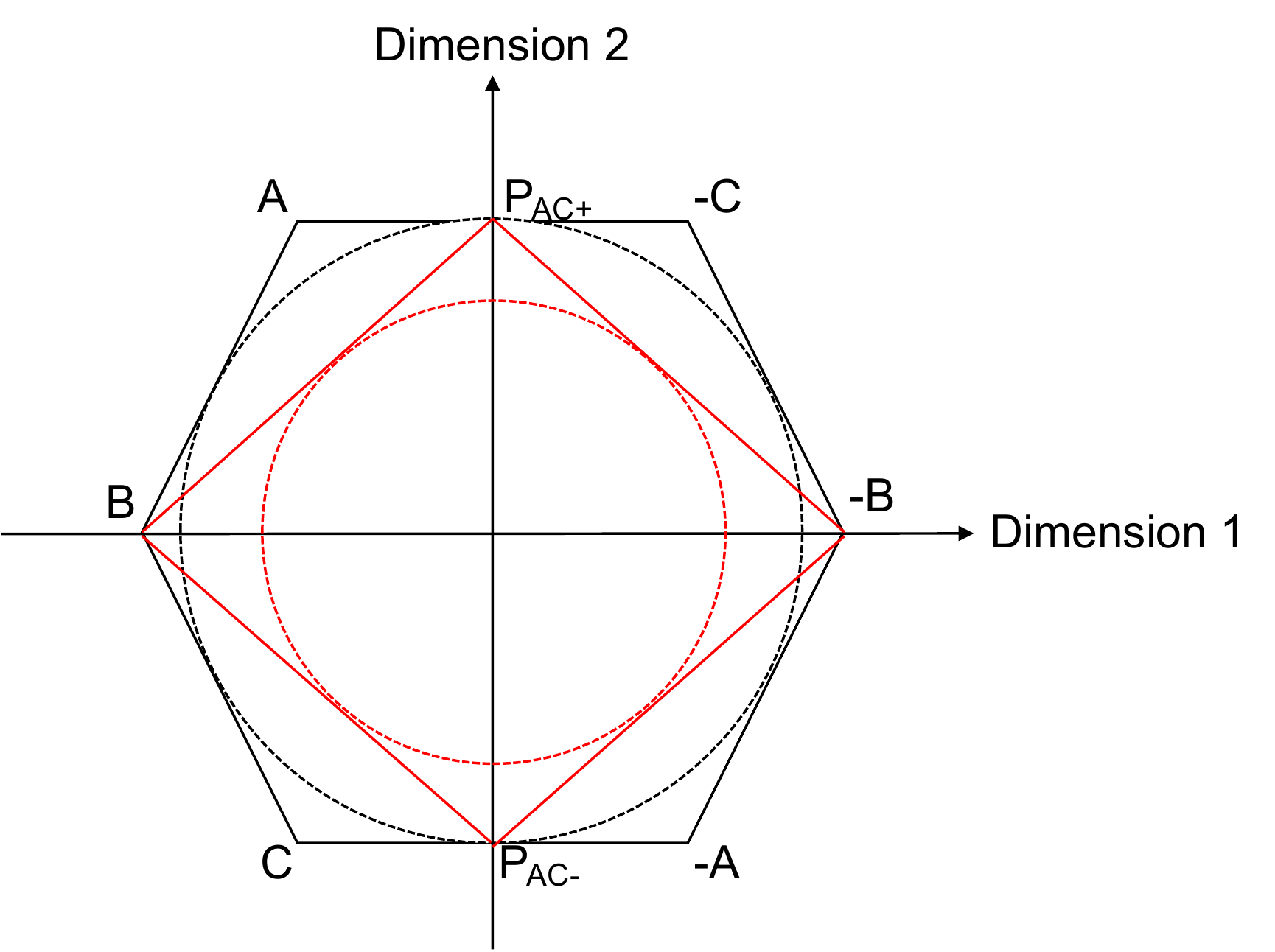}
 \end{tabular}}
  \caption{ Change of in-radius in 2-Dimensional Case}
  \label{fig:InRadius}
\end{figure}

Note that missing data will only decrease the in-radius as the missing data draws points towards the missing dimensions. 
For example, as shown in Fig \ref{fig:InRadius} where the centro-symmetric polytope under full observation is constructed by $ABC$ and the in-radius is radius of the black-dashed circle. Zero-filling the missing entires for $A$ and $C$ in Dimension 2 projects $A$, $-C$ to $P_{AC+}$ and $C$, $-A$ to $P_{AC-}$, resulting the reduced centro-symmetric body  with a smaller in-radius, which is the radius of the red-dashed circle. 
In general, the in-radius of the centro-symmetric body is large if data sample size is large and if each data is \emph{well-distributed} on the unit sphere. However, missing data projects points onto low dimensional space which degrades the uniformness of the data distribution, thus resulting in the decrease of the in-radius. 
%



\subsection{Deterministic Conditions for SSC-EWZF}
The following theorem give the deterministic conditions for subspace clustering under missing data when  SSC-EWZF is used. 

\begin{theorem} \label{corollary3}
Let $|\Omega_i^{(\ell)}| \geq d$.  Then SSC-EWZF leads to correct clustering if for all $i \in [N_\ell], k\neq \ell$, the following holds
\begin{equation}
\label{eq:corr3}
\left| 
{ \frac{({\V \lambda}_i^{(\ell)})^\top}{\|({\V \lambda}_i^{(\ell)})^\top\|_2} ({\bf Q}_i^{(\ell)})^\top  }
{  {\bf V}_{\Omega_j}^{(k)} {\bf a}_j^{(k)}}
\right|
< 
r(\mc{P}(\tilde{{\bf B}}_{-i}^{(\ell)}))
\end{equation}
where\\
(1) $\tilde{\bf b}_j^{(k)} = ({\bf Q}_i^{(\ell)})^\top {\bf V}_{\Omega_j}^{(k)} {\bf a}_j^{(k)}$, $\tilde{\bf B}_{-i}^{(\ell)}$ is the $n\times (N_\ell-1)$ matrix with columns as $\tilde{{\bf b}}_j^{(\ell)}, j\neq i$. \\
(2) ${\V \lambda}_i^{(\ell)} \in \arg \max_{\V \lambda} \langle \tilde{{\bf b}}_i^{(\ell)}, {\V \lambda} \rangle \quad s.t. \quad \|(\tilde{{\bf B}}_{-i}^{(\ell)})^\top {\V \lambda}\|_\infty \leq 1$\\
\end{theorem}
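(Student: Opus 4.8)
The plan is to establish the \emph{subspace detection property}: under \Eqn{corr3}, for every point $i$ in subspace $\ell$ the optimizer $\V{c}_i$ of the SSC-EWZF program is supported only on columns coming from subspace $\ell$. Once no cross-subspace coefficient survives, the affinity $\M{A}=|\M{C}|+|\M{C}|^\top$ is block-diagonal up to permutation and spectral clustering returns the correct partition. The argument runs parallel to the dual-certificate analysis behind Theorem~\ref{theorem}, with $\tilde{\V{a}}$ replaced by $\tilde{\V{b}}$; the only structural change is that SSC-EWZF drops the projector $\M{I}_{\oi}$ on the representing atoms, so the rotated atoms are $\tilde{\V{b}}_j^{(k)}=(\M{Q}_i^{(\ell)})^\top\M{V}_{\oj}^{(k)}\V{a}_j^{(k)}$ instead of $\tilde{\V{a}}_j^{(k)}$, while the diagonal term is unchanged, $\tilde{\V{b}}_i^{(\ell)}=\tilde{\V{a}}_i^{(\ell)}$ (since $\M{I}_{\oil}^2=\M{I}_{\oil}$).

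First I would left-multiply the constraint $\M{X}_{\oi}=\M{X}_{-i,\Omega}\V{c}_i$ by the orthogonal matrix $(\M{Q}_i^{(\ell)})^\top$; since this changes neither feasibility nor the $\ell_1$ objective, the program becomes the representation of $\tilde{\V{b}}_i^{(\ell)}$ by the rotated atoms $\tilde{\V{b}}_j^{(k)}$. Restricting attention to same-subspace atoms gives the \emph{oracle} program $\min\|\V{c}\|_1$ s.t. $\tilde{\V{b}}_i^{(\ell)}=\tilde{\M{B}}_{-i}^{(\ell)}\V{c}$, whose Lagrange dual is $\max_{\V{\lambda}}\la\tilde{\V{b}}_i^{(\ell)},\V{\lambda}\ra$ subject to $\|(\tilde{\M{B}}_{-i}^{(\ell)})^\top\V{\lambda}\|_\infty\le1$; its maximizer is the $\V{\lambda}_i^{(\ell)}$ of the statement. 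The hypothesis $|\oil|\ge d$ makes $\M{V}_{\oi}^{(\ell)}$ full column rank and, with sufficiently many same-subspace samples, keeps this oracle program feasible so that $\V{\lambda}_i^{(\ell)}$ is well defined.

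The crux is to promote $\V{\lambda}_i^{(\ell)}$ to a certificate for the \emph{full} program. By strong duality of $\ell_1$ minimization, the zero-padded oracle optimizer is optimal for the full program, and every optimizer must vanish on the cross-subspace block, as soon as the dual constraint is \emph{strict} there, i.e. $|\la\tilde{\V{b}}_j^{(k)},\V{\lambda}_i^{(\ell)}\ra|<1$ for all $k\neq\ell$ and all $j$. To check this I would invoke the inradius--circumradius polarity: $\V{\lambda}_i^{(\ell)}$ is dual-feasible, hence lies in $\mc{P}^o(\tilde{\M{B}}_{-i}^{(\ell)})$, so $\|\V{\lambda}_i^{(\ell)}\|_2\le R(\mc{P}^o(\tilde{\M{B}}_{-i}^{(\ell)}))=1/r(\mc{P}(\tilde{\M{B}}_{-i}^{(\ell)}))$. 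Factoring out this norm and substituting $\tilde{\V{b}}_j^{(k)}=(\M{Q}_i^{(\ell)})^\top\M{V}_{\oj}^{(k)}\V{a}_j^{(k)}$,
\[
|\la\tilde{\V{b}}_j^{(k)},\V{\lambda}_i^{(\ell)}\ra|\le\frac{1}{r(\mc{P}(\tilde{\M{B}}_{-i}^{(\ell)}))}\left|\frac{(\V{\lambda}_i^{(\ell)})^\top}{\|(\V{\lambda}_i^{(\ell)})^\top\|_2}(\M{Q}_i^{(\ell)})^\top\M{V}_{\oj}^{(k)}\V{a}_j^{(k)}\right| ,
\]
whose bracketed factor is exactly the left-hand side of \Eqn{corr3}; the assumed inequality therefore forces the product to be strictly below $1$, which is precisely the strict feasibility required.

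I expect the main obstacle to be this promotion step rather than the norm estimate. Two points need care: one must argue, via strict complementary slackness for $\ell_1$ programs, that strict dual feasibility on the cross-subspace atoms excludes cross-subspace support in \emph{every} optimizer and not merely in some convenient one; and one must guarantee that the oracle program is genuinely feasible so that $\V{\lambda}_i^{(\ell)}$ exists. The latter is subtler for SSC-EWZF than for SSC-EWZF-OO, because the zero-filled same-subspace atoms $\tilde{\V{b}}_j^{(\ell)}$ need not live in the same low-dimensional coordinate block as the target $\tilde{\V{b}}_i^{(\ell)}$; this loss of common support structure is exactly what makes SSC-EWZF the less robust of the two algorithms.
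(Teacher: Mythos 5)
Your proposal is correct and follows essentially the same route as the paper's own proof: rotate the program by $(\M{Q}_i^{(\ell)})^\top$, take $\V{\lambda}_i^{(\ell)}$ from the dual of the same-subspace (oracle) program, and use the inradius--circumradius polarity (Lemma~\ref{lemma1}) to convert hypothesis \eqref{eq:corr3} into strict dual feasibility $|\langle \tilde{\V{b}}_j^{(k)}, \V{\lambda}_i^{(\ell)}\rangle| < 1$ on cross-subspace atoms, which is the third condition of the paper's dual-certificate lemma (Lemma~\ref{lemma2}). The only differences are cosmetic: you re-derive the certificate-promotion step (and flag oracle feasibility, which the paper leaves implicit) where the paper simply invokes Lemma~\ref{lemma2} from prior work.
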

\begin{proof}
	The proof is provided in Appendix \ref{apdx:res}.
\end{proof}
Note that in Theorem \ref{theorem} the in-radius term on the RHS increases compared to the in-radius term on the RHS of \eqref{eq:corr3} in Theorem \ref{corollary3}, 
while the incoherence term on the LHS of \eqref{eq:theorem_eq} is undetermined compared with that on the LHS of \eqref{eq:corr3} as the projection ${\bf I}_{\Omega_i^{\ell}}$ may both increase or decrease the values.
Therefore, based on these two results we \emph{cannot} conclude whether SSC-EWZF-OO is better compared to SSC-EWZF or vice versa. 
It may be the case that on an average
the incoherence term on the LHS increases smaller than the increase of the in-radius term on the RHS in \eqref{eq:theorem_eq}, thus SSC-EWZF-OO shows better performance than SSC-EWZF  as shown in Section \ref{sec:sims}. 
This remains an important avenue for future research.

\subsection{Deterministic Conditions in Special Scenarios}
In this section, we will consider two special cases for the two algorithms. 

\subsubsection{Case 1}
The first case is when all the data points are sampled at the same locations, or $\Omega_{i} = \Omega$  for all $i$.  
In this case, the basis of subspace is only subspace dependent, not data dependent. Thus let ${\bf V}_\Omega^{(\ell)}$ denote the basis for subspace $\mathbb{S}^{(\ell)}$ when data are sampled at the same locations. 
We  note that both the Algorithms SSC-EWZF and SSC-EWZF-OO are the same in this special case. 


\begin{theorem}
\label{thm:1}
Let $\oil = \Omega$ for all $i, \ell$ and $|\Omega| \geq d$.  Then SSC-LP leads to correct clustering if for all $ i \in [N_\ell]$, $k \neq \ell$, the following holds,
\begin{align}
\label{eq:case1}
\left| \frac{ \V{{\bar \lambda}}_{i}^{(\ell) \top}}{\| \V{\bar{\lambda}}_i^{(\ell)}\|_2} (\vell_{\Omega})^{\dagger} \M{V}_{\Omega}^{(k)} \V{a}_{j}^{(k)} \right| < r(\mc{P}({\M{A}}_{-i}^{(\ell)}))\,\, ,
\end{align} 
where $\V{\bar{\lambda}}_i^{(\ell)} \in \arg \max_{\V{\lambda}} \langle \V{a}_{i}^{(\ell)}, \V{{ \lambda}} \rangle\,\, : \,\, \| \M{A}_{-i}^{(\ell) \top} \V{\lambda}\|_{\infty} \leq 1$.
\end{theorem}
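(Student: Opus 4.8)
The plan is to obtain Theorem~\ref{thm:1} as the common specialization of Theorem~\ref{theorem} (equivalently Theorem~\ref{corollary3}, which coincides with it here) to the case $\Omega_i^{(\ell)}=\Omega$. First I would record what the shared sampling pattern buys us: since $V_\Omega^{(\ell)}=I_\Omega U^{(\ell)}$ no longer depends on $i$, its SVD factors $Q_i^{(\ell)},\Sigma_i^{(\ell)},R_i^{(\ell)}$ may be written $Q^{(\ell)},\Sigma^{(\ell)},R^{(\ell)}$, and $I_{\Omega_i^{(\ell)}}I_{\Omega_j^{(k)}}=I_\Omega$ by idempotency. Substituting $V_\Omega^{(\ell)}=Q^{(\ell)}\Sigma^{(\ell)}R^{(\ell)\top}$ into \eqref{eq: base} collapses the generic column to $\tilde{a}_j^{(\ell)}=(Q^{(\ell)})^\top V_\Omega^{(\ell)}a_j^{(\ell)}=\Sigma^{(\ell)}R^{(\ell)\top}a_j^{(\ell)}$, so that $\tilde{A}_{-i}^{(\ell)}=M A_{-i}^{(\ell)}$ and $\tilde{a}_i^{(\ell)}=Ma_i^{(\ell)}$ with the single $n\times d$ factor $M:=\Sigma^{(\ell)}R^{(\ell)\top}$, whose range equals $\mathrm{range}(\Sigma^{(\ell)})$ because $R^{(\ell)}$ is invertible.

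Next I would reduce the program defining $\lambda_i^{(\ell)}$ to the one defining $\bar{\lambda}_i^{(\ell)}$. Its objective and constraint enter only through $M^\top\lambda$, since $\langle Ma_i^{(\ell)},\lambda\rangle=\langle a_i^{(\ell)},M^\top\lambda\rangle$ and $\|(MA_{-i}^{(\ell)})^\top\lambda\|_\infty=\|A_{-i}^{(\ell)\top}(M^\top\lambda)\|_\infty$, so the change of variable $\mu=M^\top\lambda$ turns it into exactly the program defining $\bar{\lambda}_i^{(\ell)}$ in Theorem~\ref{thm:1}. Hence any maximizer obeys $M^\top\lambda_i^{(\ell)}=\bar{\lambda}_i^{(\ell)}$, and as the program sees $\lambda$ only through $M^\top\lambda$ I may pick the representative in $\mathrm{range}(M)=\mathrm{range}(\Sigma^{(\ell)})$, namely $\lambda_i^{(\ell)}=M(M^\top M)^{-1}\bar{\lambda}_i^{(\ell)}$.

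I would then match the two incoherence numerators exactly. From the SVD, $(V_\Omega^{(\ell)})^\dagger=R^{(\ell)}\Sigma^{(\ell)\dagger}(Q^{(\ell)})^\top$, so $M(V_\Omega^{(\ell)})^\dagger=\Sigma^{(\ell)}\Sigma^{(\ell)\dagger}(Q^{(\ell)})^\top=P(Q^{(\ell)})^\top$, where $P$ is the orthogonal projector onto $\mathrm{range}(\Sigma^{(\ell)})$. Consequently $\langle\bar{\lambda}_i^{(\ell)},(V_\Omega^{(\ell)})^\dagger V_\Omega^{(k)}a_j^{(k)}\rangle=\langle\lambda_i^{(\ell)},P(Q^{(\ell)})^\top V_\Omega^{(k)}a_j^{(k)}\rangle=\langle\lambda_i^{(\ell)},(Q^{(\ell)})^\top V_\Omega^{(k)}a_j^{(k)}\rangle$, the last equality using $P\lambda_i^{(\ell)}=\lambda_i^{(\ell)}$. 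Thus the un-normalized left-hand sides of \eqref{eq:theorem_eq} and \eqref{eq:case1} agree.

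The remaining and, I expect, the only delicate step is reconciling the normalizations, i.e.\ showing the equivalence survives replacing $(\|\lambda_i^{(\ell)}\|_2,\,r(\mathcal{P}(\tilde{A}_{-i}^{(\ell)})))$ by $(\|\bar{\lambda}_i^{(\ell)}\|_2,\,r(\mathcal{P}(A_{-i}^{(\ell)})))$; concretely I must verify $\|\lambda_i^{(\ell)}\|_2\, r(\mathcal{P}(\tilde{A}_{-i}^{(\ell)}))=\|\bar{\lambda}_i^{(\ell)}\|_2\, r(\mathcal{P}(A_{-i}^{(\ell)}))$, or at least the one-sided inequality that makes \eqref{eq:case1} sufficient for \eqref{eq:theorem_eq}. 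The obstacle is that $M$ is not an isometry, so the relative in-radius of $\mathcal{P}(MA_{-i}^{(\ell)})$ inside $\mathrm{range}(M)$ genuinely differs from $r(\mathcal{P}(A_{-i}^{(\ell)}))$, while the factor $\|\lambda_i^{(\ell)}\|_2^2=\bar{\lambda}_i^{(\ell)\top}(M^\top M)^{-1}\bar{\lambda}_i^{(\ell)}$ must cancel exactly this distortion. I would settle it by transporting the in-radius through the isomorphism $M$ restricted to $\mathrm{range}(\Sigma^{(\ell)})$ and using the polar-set characterization of the in-radius together with the fact that $\bar{\lambda}_i^{(\ell)}$ is the supporting dual optimizer, so that the singular-value scalings telescope away (as in the scalar instance where both products equal $1$). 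Specializing finally to $\Omega=\{1,\dots,n\}$ gives $V_\Omega^{(\ell)}=U^{(\ell)}$ and $(V_\Omega^{(\ell)})^\dagger=(U^{(\ell)})^\top$, recovering the fully sampled guarantee of \cite{dim_red_sc_ext}.
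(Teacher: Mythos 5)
Your first three paragraphs are correct: with a common $\Omega$ the SVD factors lose their dependence on $i$, $\tilde{\M{A}}_{-i}^{(\ell)} = \M{M}\M{A}_{-i}^{(\ell)}$ with $\M{M}=\M{\Sigma}^{(\ell)}\M{R}^{(\ell)\top}$, the two dual programs correspond under $\V{\mu}=\M{M}^\top\V{\lambda}$, and for the representative $\V{\lambda}_i^{(\ell)}=\M{M}(\M{M}^\top\M{M})^{-1}\bar{\V{\lambda}}_i^{(\ell)}$ the unnormalized incoherence terms of \eqref{eq:theorem_eq} and \eqref{eq:case1} coincide. But the step you flagged as delicate is a genuine gap, and the relation you need there is simply false: to get Theorem \ref{thm:1} from Theorem \ref{theorem} you need \eqref{eq:case1} to imply \eqref{eq:theorem_eq}, i.e. $\|\bar{\V{\lambda}}_i^{(\ell)}\|_2\, r(\mc{P}(\M{A}_{-i}^{(\ell)})) \leq \|\V{\lambda}_i^{(\ell)}\|_2\, r(\mc{P}(\tilde{\M{A}}_{-i}^{(\ell)}))$, and the singular-value distortions do \emph{not} telescope. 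Concretely, take $d=2$, $\V{a}_i^{(\ell)}=(1,0)^\top$, and let $\M{A}_{-i}^{(\ell)}$ have columns $(0,1)^\top$ and $(\pm\sin\alpha,\cos\alpha)^\top$ for small $\alpha$. The dual optimizer is unique, $\bar{\V{\lambda}}_i^{(\ell)}=(1/\sin\alpha,0)^\top$, and $r(\mc{P}(\M{A}_{-i}^{(\ell)}))=\sin\alpha$, so $\|\bar{\V{\lambda}}_i^{(\ell)}\|_2\, r(\mc{P}(\M{A}_{-i}^{(\ell)}))=1$. Now choose the basis and sampling so that $\M{M}=\mathrm{diag}(1,\epsilon)$ (e.g. $n=4$, $\Omega=\{1,2\}$, $\M{U}^{(\ell)}=[\,\V{e}_1,\ \epsilon \V{e}_2+\sqrt{1-\epsilon^2}\,\V{e}_3\,]$, so $|\Omega|=d$). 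Because the second coordinate of $\bar{\V{\lambda}}_i^{(\ell)}$ vanishes, your representative satisfies $\V{\lambda}_i^{(\ell)}=\bar{\V{\lambda}}_i^{(\ell)}$, so the two norms agree; but every column of $\tilde{\M{A}}_{-i}^{(\ell)}=\M{M}\M{A}_{-i}^{(\ell)}$ has second coordinate of magnitude at most $\epsilon$, so $\mc{P}(\tilde{\M{A}}_{-i}^{(\ell)})$ lies in the slab $\{|y|\leq\epsilon\}$ and $r(\mc{P}(\tilde{\M{A}}_{-i}^{(\ell)}))\leq\epsilon$. Hence $\|\V{\lambda}_i^{(\ell)}\|_2\, r(\mc{P}(\tilde{\M{A}}_{-i}^{(\ell)}))\leq\epsilon/\sin\alpha\ll 1$, and any cross-subspace point whose common numerator has magnitude in $[\epsilon/\sin\alpha,\,1)$ satisfies \eqref{eq:case1} while violating \eqref{eq:theorem_eq}. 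The reason is structural: the norm ratio $\|\bar{\V{\lambda}}_i^{(\ell)}\|_2/\|\V{\lambda}_i^{(\ell)}\|_2$ is governed by the singular values along the \emph{direction of the dual optimizer}, while the in-radius ratio is governed by the singular values along the \emph{thinnest direction of the polytope}, and these need not match.

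This is precisely why the paper does not deduce Theorem \ref{thm:1} from Theorem \ref{theorem}; it reruns the primal--dual certificate construction from scratch for the common-$\Omega$ case, performing the change of variables \emph{before} invoking polar duality. There one sets $\bar{\V{\lambda}}=(\M{V}_\Omega^{(\ell)})^\top\V{v}$ in the dual of $\min\|\V{c}\|_1$ s.t. $\M{V}_\Omega^{(\ell)}\V{a}_i^{(\ell)}=\M{V}_\Omega^{(\ell)}\M{A}_{-i}^{(\ell)}\V{c}$, takes $\bar{\V{\lambda}}_i^{(\ell)}$ optimizing the clean program of the theorem statement, and defines the certificate $\V{v}_i^{(\ell)}=((\M{V}_\Omega^{(\ell)})^\top)^\dagger\bar{\V{\lambda}}_i^{(\ell)}$. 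The first two conditions of Lemma \ref{lemma2} follow from strong duality and dual feasibility, and the third follows from Lemma \ref{lemma1} applied to the \emph{clean} polytope, $\|\bar{\V{\lambda}}_i^{(\ell)}\|_2\leq 1/r(\mc{P}(\M{A}_{-i}^{(\ell)}))$, combined with \eqref{eq:case1} and the identity $(((\M{V}_\Omega^{(\ell)})^\top)^\dagger)^\top=(\M{V}_\Omega^{(\ell)})^\dagger$. In that argument the dual norm and the in-radius are always measured in the same (clean) coordinate system, which is exactly the pairing your route through Theorem \ref{theorem} cannot arrange; to fix your write-up, replace the final reconciliation step with this direct certificate construction.
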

\begin{proof}
	The proof is provided in Appendix \ref{apdx:same2}.
\end{proof}

\subsubsection{Case 2}
We now consider the case when none of the data is missing. In this special case, both SSC-EWZF and SSC-EWZF-OO algorithms are the same too. We note that the deterministic conditions in this case reduce to the same as in  \cite{dim_red_sc_ext}. The deterministic conditions in Theorem \ref{theorem} can be specialized to this case as follows. 

\begin{corollary}\label{corollary1}  
In the case that data is fully observed, the deterministic conditions in Theorem \ref{corollary3} and Theorem \ref{theorem} converts as follows. SSC leads to correct clustering if for all $ i \in [N_\ell]$, $k \neq \ell$, the following holds,
\begin{equation}\label{eq:case 1}
\left| 
\frac{(\bar{{\V \lambda}}_i^{(\ell)})^\top} {\|\bar{{\V \lambda}}_i^{(\ell)}\|_2} 
({\bf U}^{(\ell)})^\dagger
{\bf U}^{(k)} {\bf a}_j^{(k)}
\right|
< r(\mc{P}({{\bf A}}_{-i}^{(\ell)}))
\end{equation}
where 
$ \bar{{\V \lambda}}_i^{(\ell)} \in \arg \max_{\V \lambda} \langle {{\bf a}}_i^{(\ell)}, {\V \lambda} \rangle \quad s.t. \quad \|({{\bf A}}_{-i}^{(\ell)})^\top {\V \lambda}\|_\infty \leq  1 $. 
This is the same result as in Theorem 7 in \cite{dim_red_sc_ext}.
\end{corollary}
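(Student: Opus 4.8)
The plan is to specialize Theorem~\ref{theorem} (equivalently Theorem~\ref{corollary3}) by exploiting that full observation renders every selection matrix trivial. First I would note that when no data is missing, $\Omega_i^{(\ell)} = \{1,\dots,n\}$, so ${\bf I}_{\Omega_i^{(\ell)}} = {\bf I}_n$ for every $i,\ell$. Consequently ${\bf V}_{\Omega_i}^{(\ell)} = {\bf I}_{\Omega_i^{(\ell)}}{\bf U}^{(\ell)} = {\bf U}^{(\ell)}$, and comparing the definition of $\tilde{\bf a}_j^{(k)}$ in \eqref{eq: base} with that of $\tilde{\bf b}_j^{(k)}$ in Theorem~\ref{corollary3} shows $\tilde{\bf b}_j^{(k)} = ({\bf Q}_i^{(\ell)})^\top {\bf U}^{(k)}{\bf a}_j^{(k)} = \tilde{\bf a}_j^{(k)}$. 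Hence the two theorems impose identical conditions in this regime, so it suffices to reduce the condition of Theorem~\ref{theorem}.

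The key algebraic step is to unwind the SVD ${\bf V}_{\Omega_i}^{(\ell)} = {\bf U}^{(\ell)} = {\bf Q}_i^{(\ell)}{\bf \Sigma}_i^{(\ell)}({\bf R}_i^{(\ell)})^\top$. Since ${\bf U}^{(\ell)}$ has orthonormal columns, all its singular values equal $1$, so ${\bf \Sigma}_i^{(\ell)}$ reduces to the identity on its top block and vanishes elsewhere, while ${\bf R}_i^{(\ell)}$ is orthogonal. Substituting into \eqref{eq: base} with all ${\bf I}_\Omega$ equal to the identity gives $\tilde{\bf a}_j^{(\ell)} = {\bf \Sigma}_i^{(\ell)}({\bf R}_i^{(\ell)})^\top {\bf a}_j^{(\ell)}$, i.e. $({\bf R}_i^{(\ell)})^\top {\bf a}_j^{(\ell)}$ padded by zeros. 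Thus each column of $\tilde{\bf A}_{-i}^{(\ell)}$ is an orthogonal rotation of the corresponding column of ${\bf A}_{-i}^{(\ell)}$, isometrically embedded in $\mathbb{R}^n$. Because the in-radius of a centro-symmetric polytope, measured within the affine span it generates, is invariant under orthogonal maps and under isometric embedding, I obtain at once the equality of the right-hand sides, $r(\mc{P}(\tilde{\bf A}_{-i}^{(\ell)})) = r(\mc{P}({\bf A}_{-i}^{(\ell)}))$.

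Finally I would transport the dual variable. Since every $\tilde{\bf a}_j^{(\ell)}$ lies in the range of the first $d$ columns of ${\bf Q}_i^{(\ell)}$, both the objective $\langle \tilde{\bf a}_i^{(\ell)}, {\V \lambda}\rangle$ and the constraint $\|(\tilde{\bf A}_{-i}^{(\ell)})^\top{\V \lambda}\|_\infty \le 1$ depend only on the projection of ${\V \lambda}$ onto that $d$-dimensional subspace; hence a maximizer ${\V \lambda}_i^{(\ell)}$ may be chosen with zero component in the orthogonal complement, and writing its nonzero part as $({\bf R}_i^{(\ell)})^\top\bar{{\V \lambda}}_i^{(\ell)}$ turns the program exactly into the one defining $\bar{{\V \lambda}}_i^{(\ell)}$, with $\|{\V \lambda}_i^{(\ell)}\|_2 = \|\bar{{\V \lambda}}_i^{(\ell)}\|_2$ by orthogonality. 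Plugging this choice into the left-hand side and using ${\bf R}_i^{(\ell)}({\bf \Sigma}_i^{(\ell)})^\top({\bf Q}_i^{(\ell)})^\top = ({\bf U}^{(\ell)})^\top = ({\bf U}^{(\ell)})^\dagger$ (the last equality because ${\bf U}^{(\ell)}$ has orthonormal columns) collapses it to $|(\bar{{\V \lambda}}_i^{(\ell)})^\top({\bf U}^{(\ell)})^\dagger{\bf U}^{(k)}{\bf a}_j^{(k)}|/\|\bar{{\V \lambda}}_i^{(\ell)}\|_2$, which is precisely the claimed inequality. I expect the main obstacle to be the careful justification of in-radius invariance under the isometric embedding together with the non-uniqueness of the maximizer ${\V \lambda}_i^{(\ell)}$ (the unconstrained orthogonal-complement direction), since the reduction hinges on selecting the canonical maximizer whose component there vanishes; the remaining matrix manipulations are routine given the orthonormality of ${\bf U}^{(\ell)}$.
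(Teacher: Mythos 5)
Your proof is correct, but it follows a genuinely different route from the paper's. The paper never unwinds Theorem~\ref{theorem} at all: it obtains Corollary~\ref{corollary1} by pure substitution into Theorem~\ref{thm:1} (the same-location sampling case), setting ${\bf I}_{\Omega} = {\bf I}$ so that ${\bf V}_{\Omega}^{(\ell)} = {\bf U}^{(\ell)}$ and ${\bf V}_{\Omega}^{(k)} = {\bf U}^{(k)}$; the change of dual variable $\bar{\V{\lambda}} = ({\bf V}_{\Omega}^{(\ell)})^\top {\bf v}$ that carries the program back to coefficient space is already packaged inside the proof of Theorem~\ref{thm:1} in Appendix~\ref{apdx:same2}. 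You instead specialize Theorem~\ref{theorem} directly: since all singular values of ${\bf U}^{(\ell)}$ equal one, $\tilde{\bf a}_j^{(\ell)} = {\bf \Sigma}_i^{(\ell)}({\bf R}_i^{(\ell)})^\top{\bf a}_j^{(\ell)}$ is an isometric embedding of ${\bf a}_j^{(\ell)}$, the (relative) in-radius is invariant under that isometry, and the dual maximizer can be transported between the two programs using ${\bf R}_i^{(\ell)}({\bf \Sigma}_i^{(\ell)})^\top({\bf Q}_i^{(\ell)})^\top = ({\bf U}^{(\ell)})^\top = ({\bf U}^{(\ell)})^\dagger$. Your route is longer but it substantiates exactly what the corollary asserts --- that the conditions of Theorems~\ref{corollary3} and~\ref{theorem}, stated in terms of the tilde quantities, collapse to the full-data condition --- whereas the paper's substitution into Theorem~\ref{thm:1} establishes sufficiency without ever reconnecting to those tilde quantities. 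Two caveats you flag are real and you handle them correctly: the equality $r(\mc{P}(\tilde{\bf A}_{-i}^{(\ell)})) = r(\mc{P}({\bf A}_{-i}^{(\ell)}))$ holds only when the in-radius is read relative to the span of the polytope (otherwise the left side is zero, since $\mc{P}(\tilde{\bf A}_{-i}^{(\ell)})$ is degenerate in $\mathbb{R}^n$), a reading the paper itself needs implicitly for Theorem~\ref{theorem} to be non-vacuous; and the non-uniqueness of the dual maximizer is benign because the sufficiency argument only requires the inequality for one, suitably chosen, maximizer.
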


We note that Corollary \ref{corollary1} is a special case of Theorem \ref{thm:1} in which ${\bf I}_\Omega^{(\ell)}= {\bf I}_\Omega^{(k)} = {\bf I}$ such that ${\bf V}_\Omega^{(\ell)} = {\bf U}^{(\ell)}$, ${\bf V}_\Omega^{(k)} = {\bf U}^{(k)}$.
Since the main difference between \eqref{eq:case 1} and \eqref{eq:case1} comes from the change of basis from ${\bf  U}^{(\ell)}$ to ${\bf V}_\Omega^{(\ell)}$. Therefore, the performance degradation in clustering comes from increase in the left hand side (LHS) in \eqref{eq:case1} on an average under missing data.
The increase in LHS of \eqref{eq:case1}  depends on how badly conditioned $\M{V}_{\Omega}^{(\ell)}$ is. If some of the singular values of $\M{V}_{\Omega}^{(\ell)}$ are very small, it will make the LHS in Equation \eqref{eq:case1} large. To further understand this, let us consider a semi-random model where in each subspace the data points are generated by choosing $\V{a}_{i}^{(\ell)}$ uniformly randomly on a unit sphere \cite{soltanolkotabi2012}. In this case it is easy to show that $\dfrac{\bar{\V{\lambda}}_i^{(\ell)} }{\|\bar{\V{\lambda}}_i^{(\ell)}\|_2}$ are also uniformly distributed on the unit sphere \cite{dim_red_sc_ext} and the {expected value} of the LHS becomes $ \dfrac{\|(\vell_{\Omega})^{\dagger} \M{V}_{\Omega}^{(k)}\|_{F}}{d}$. This is essentially the (unnormalized) co-ordinate restricted coherence between subspaces $\ell$ and $k$.

If the subspaces are sufficiently incoherent with the standard basis of satisfy an RIP like property for large $|\Omega|$, then the condition number of ${\bf V}_\Omega^{(\ell)}$ is controlled and one can expect to obtain similar performance as the full observation case.

Note that while in this setting one can ensure perfect clustering, one cannot ensure either perfect subspace identification or completion. This is because in this case the deterministic necessary conditions for identification and completion \cite{2014arXiv1410.0633P} are not satisfied. {This indicates that clustering is an easier problem compared to subspace identification and completion under missing data.}



 \begin{figure}[htbp]
\centering
\vspace{-25mm}
\includegraphics[ width = .45\textwidth]{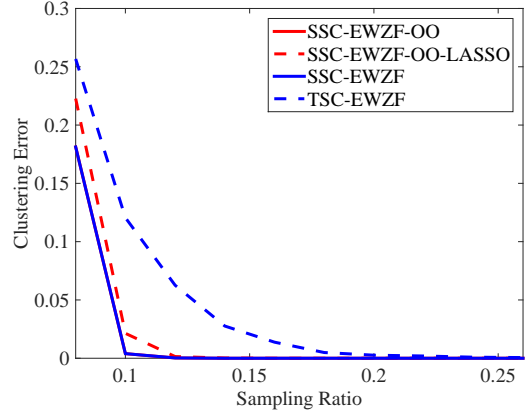} 
 \vspace{-20mm}
\caption{Clustering error in special Case 1 (points sampled at same $pn$ co-ordinates)  for varying $p$.}
\label{simulation0}
\end{figure}

\begin{figure*}[t!]
	\centering
	\begin{subfigure}[b]{0.45\textwidth}
		\centering
	\includegraphics[trim=.45in 1.5in .8in 2.8in, clip, width = \textwidth]{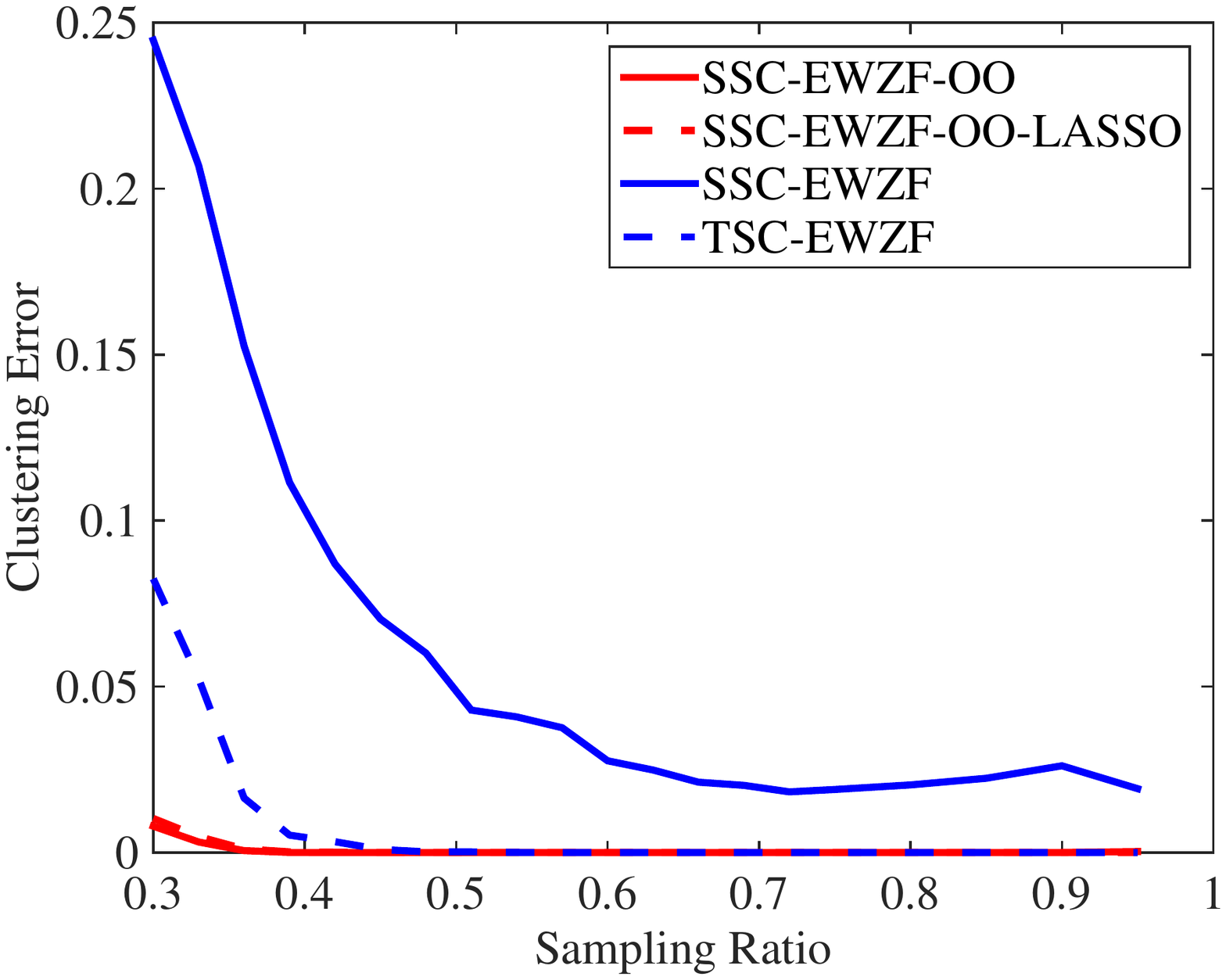}\vspace{-.3in}
		\caption{Clustering Error}
	\end{subfigure}%
	~ 
	\begin{subfigure}[b]{0.45\textwidth}
		\centering
	\includegraphics[trim=.6in 1.5in .8in 2.8in, clip, width = \textwidth]{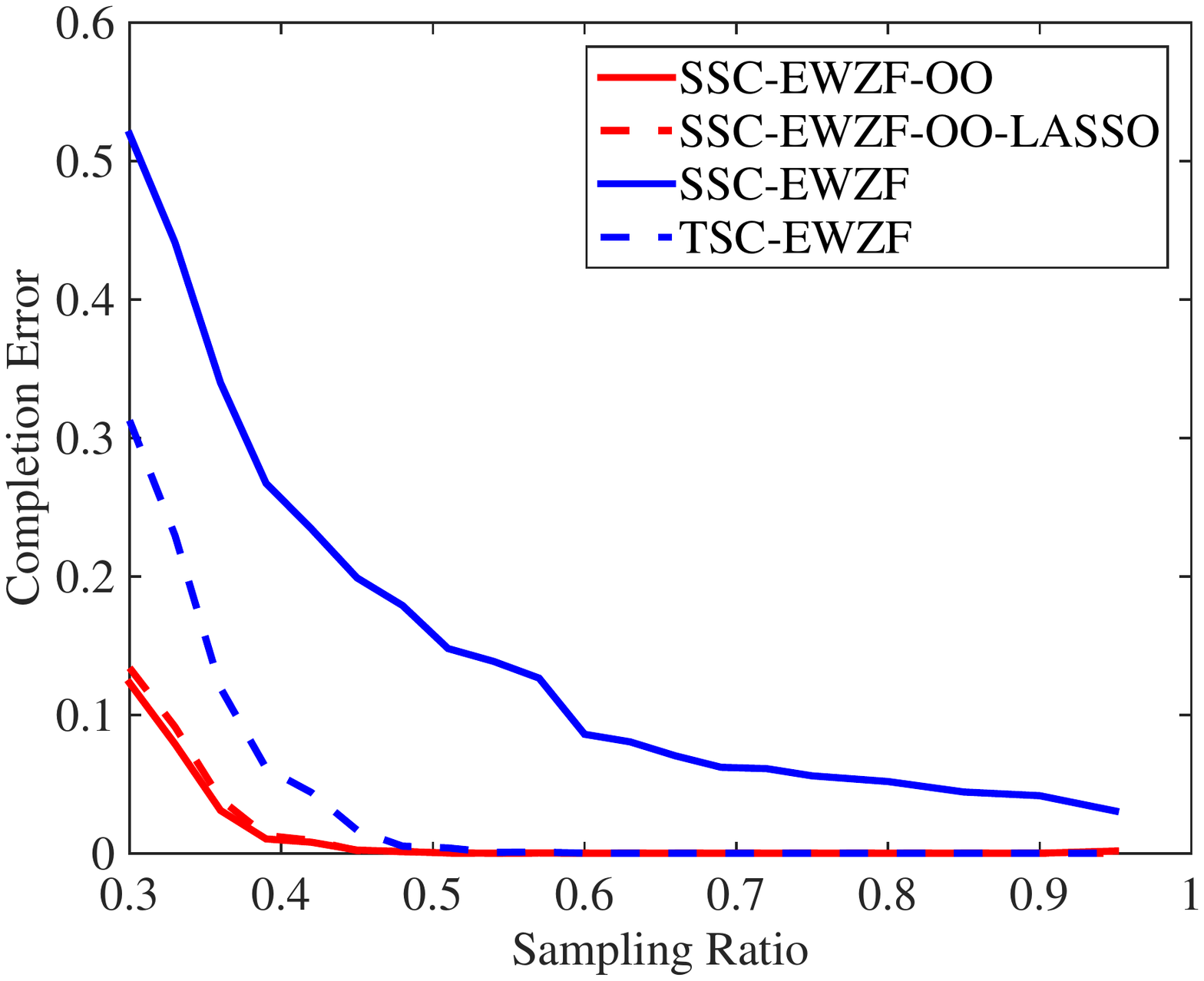}\vspace{-.3in}
		\caption{Completion Error}
	\end{subfigure}
	~ 
		\begin{subfigure}[b]{0.45\textwidth}
			\centering
			      \includegraphics[trim=.5in 1.5in .8in 2.8in, clip, width = \textwidth]{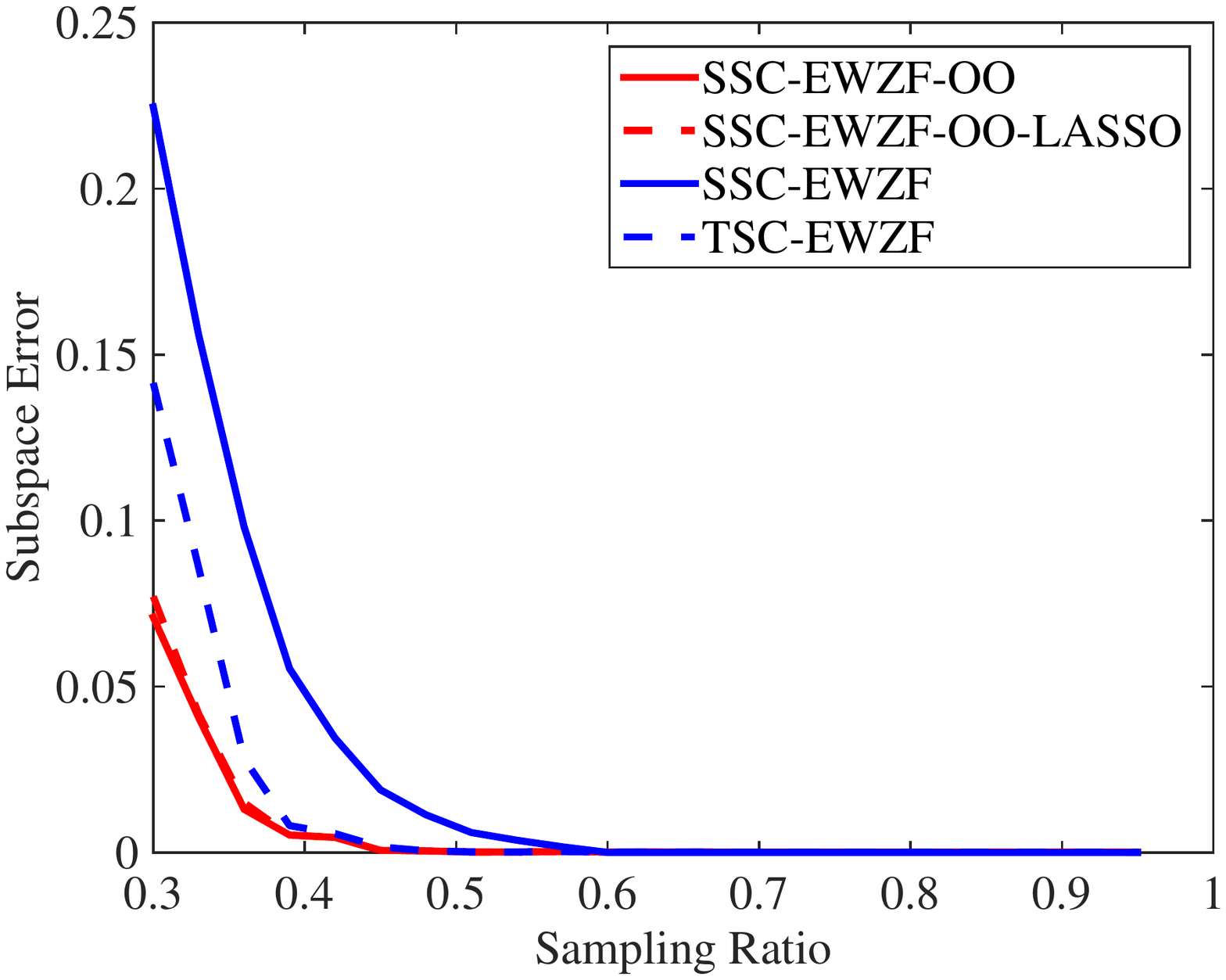}\vspace{-.3in}
			      \caption{Orthonormal Basis based Subspace Error}
		\end{subfigure}
	~
			\begin{subfigure}[b]{0.45\textwidth}
			\centering
			      \includegraphics[trim=.5in 1.5in .8in 2.8in, clip, width = \textwidth]{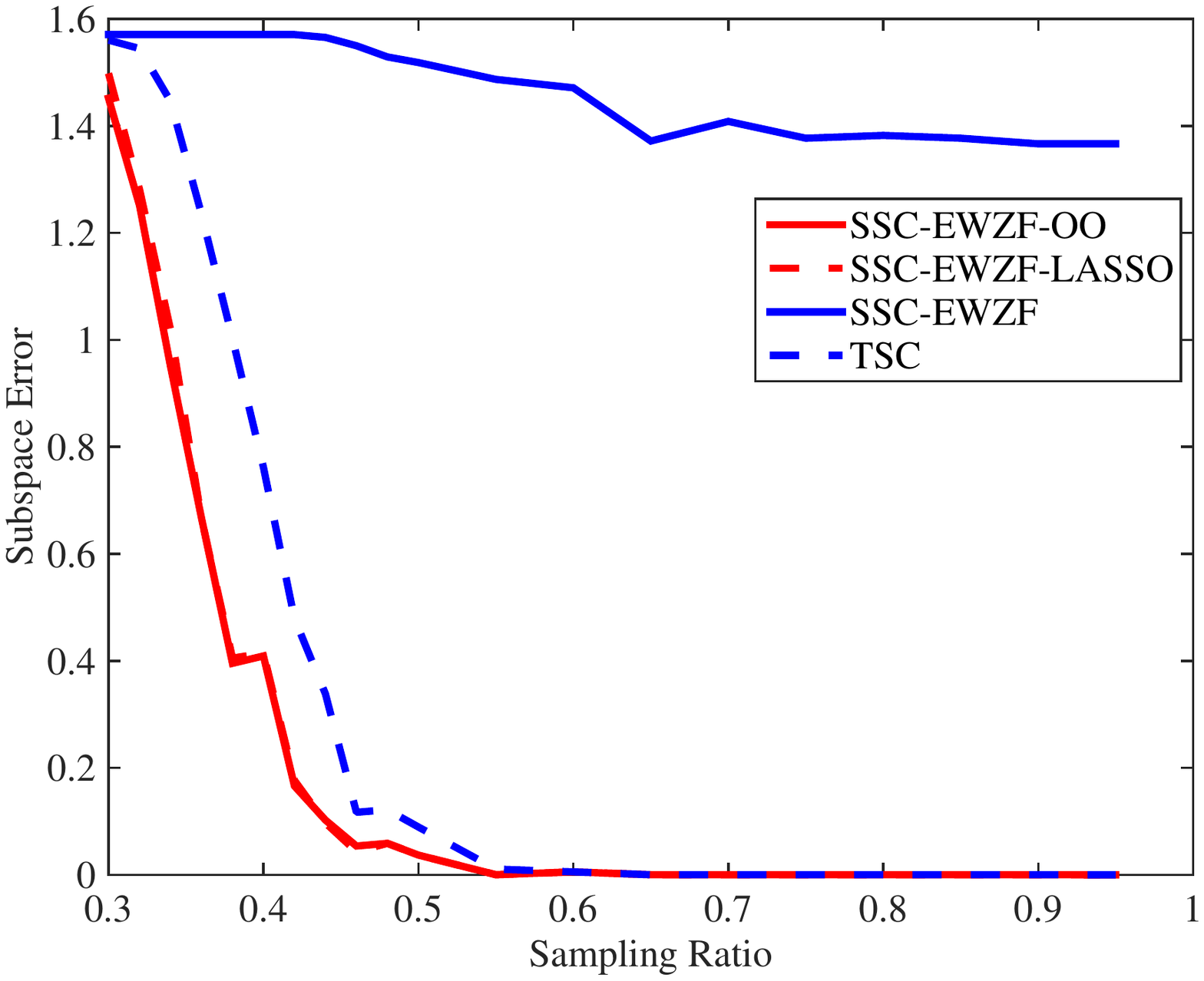}\vspace{-.3in}
			      \caption{Grassmann Metric based Subspace Error}
		\end{subfigure}
	\caption{Different error metrics  for the case of random sampling.} \label{fig:tensor}\vspace{-.1in}
	\label{simulation}
\end{figure*}



\section{Numerical Results}
\label{sec:sims}

In this section, we will see the numerical performance of the proposed algorithms, SSC-EWZF-OO and SSC-EWZF, as compared to two other baseline algorithms for data clustering and completion under the UoS model with missing data. The data is generated by using $L$ rank $d$ subspaces, each composed of $N_\ell$ vectors of dimension $n$. We assume $n=50,\  L=3,\   d=3,$ and $N_\ell=150$ for the numerical results such that there are enough data to identify the subspace \cite{pimentel2015characterization}. 
Data in each subspace is generated by a multiplication of standard entry-wise Gaussian distributed $n\times d$ matrix and a standard entry-wise Gaussian distributed $d\times N_\ell$ matrix. 

We compare our algorithms with two other algorithms. The first is  SSC-EWZF-OO-LASSO \cite{Yang2015}, which solves the LASSO problem rather than the $\ell_1$ norm minimization problem under linear equality constraint in this paper. This algorithm is selected as it gives the lowest clustering error among the different algorithms considered in \cite{Yang2015} (e.g., SSC-EC, SSC-CEC, ZF-SSC, MC-SSC).  The performance of  SSC-EWZF-OO-LASSO algorithm largely depends on the choice of $\lambda$, which is chosen to be
\begin{align}
\lambda =\frac{\alpha}{\max_{i\neq j} |X_{\Omega_i}^\top X_{\Omega _j}|_{ij}},
\end{align}
where $\alpha $ is a tuning parameter\cite{Yang2015},  set to be 7.34 for our experiment (  selected by performing an optimized performance for different values of $\alpha$). 
The second is TSC-EWZF algorithm proposed in \cite{tsc2014}, which builds adjacency matrix by thresholding correlations.  The thresholding parameter $q$ is given as
\begin{align}
q=\sqrt{N_\ell\log (N_\ell) },
\end{align}
such that $q$ is of an order smaller than  $N_\ell$ and larger than $\log N_\ell$  \cite{tsc2014}.

Since our proposed algorithm considers sparse linear representation based on observed entries to build adjacency matrix for each data while TSC-EWZF algorithm builds affinity matrix by thresholding pair-wise distance, TSC-EWZF algorithm could be  a good comparison in respect to the way of building adjacency matrix. 

In our simulations, we consider two cases. The first case illustrates Theorem \ref{thm:1} in this paper,  where all the points are sampled at the same co-ordinates which are the  first $pn$ co-ordinates. 
The second case corresponds to the more general case,  where missing data in each point is randomly sampled at a rounded value of $pn$ co-ordinates thus giving a sampling rate of approximately $p$ and a missing rate of approximately $1-p$. All the results are averaged over $100$ runs for the choice of the data and the sampled elements.


The comparisons for data clustering and  completion are performed using three metrics as explained further. The first metric  is the clustering error.  Clustering error is the ratio calculated by the number of wrongly classified data divided by the number of total data, same as defined in \cite{Yang2015}.

The clustering error for different sampling ratio $p$ in the first case is shown in Fig. \ref{simulation0}, where SSC-EWZF-OO and SSC-EWZF perform the same and show the best clustering accuracy for all sampling ratios from 8\% to 26\%. Furthermore, the plot indicates that the SSC-EWZF-OO and SSC-EWZF cluster data perfectly with $p=0.12$, equivalently at $6$ observations out of $50$ entries, while  SSC-EWZF-OO-LASSO and TSC-EWZF require $8$ and $12$ observations, respectively. Since the rank of each cluster is $3$ and only $6$ observations for each point are needed, we see that observing data at same co-ordinates need much less data for efficient clustering. 
We note that we cannot identify the subspace or complete the data with these observations further illustrating that clustering requires less number of observations than that required for data completion.

The clustering error for random sampling, the scenario described in the Theorem \ref{theorem}, with sampling ratio from 0.30 to 0.95 is shown in Fig \ref{simulation}(a), 
where we note that clustering error with our proposed algorithm, SSC-EWZF-OO,  is the minimum among the four algorithms. Furthermore, the plot shows that the  sampling ratio at which the clustering error hits zero for the algorithms SSC-EWZF-OO,  SSC-EWZF-OO-LASSO and TSC-EWZF are 0.36, 0.38, and 0.46, respectively. Thus, the proposed SSC-EWZF-OO algorithm required least number of observations to efficiently cluster the data. { We further note that the amount of data needed to cluster efficiently for random sampling (36\%) is larger than that for observing data at the same co-ordinates (10\%).  Finally, SSC-EWZF algorithm performs the worst among the four algorithms, where the clustering error does not reach zero even at 95\% observed entries.} 

The second metric is the completion error. Let the recovered matrix using a clustering algorithm be the output of matrix completion using SVT method 
\cite{candes2009exact} on the subspaces found as a result of the subspace clustering and the true matrix be the ground truth of the matrix with missing data. Then the recovery difference is defined as the matrix difference between recovered matrix and true matrix. Thus the completion error is measured by ratio of the Frobenius Norm of the recovery difference to the Frobenius Norm of the true matrix. 
The completion error for different values of $p$ can be seen  in Fig \ref{simulation}(b), 
 where we see completion error is positively correlated with clustering error and small percentage clustering error can result in large percentage completion error. Similar to the clustering error, SSC-EWZF-OO has the lowest completion error among the four algorithms 
and the completion errors for SSC-EWZF-OO,  SSC-EWZF-OO-LASSO and TSC-EWZF becomes zero at sampling ratio of around 0.50, 0.50, and 0.55 respectively, which are larger than the corresponding thresholds for the clustering errors.  Consistent with subspace clustering, SSC-EWZF performs the worst and does not give perfect completion with 95\% observations. 


The third metric is the subspace error.  Since both the completed data and the original data is in union of subspaces, we first find the distance between matched  subspaces in both the completed and original data and average over the different subspaces, where matching that gives minimum error is chosen. The difference in two subspaces is defined in terms of principal angle as follows 
\cite{bjorck1973numerical, wedin1983angles}
\begin{displaymath}
\theta=\arcsin (\|(\M{B}-\M{A}\M{A}^\top\M{B})\|_{\ell_2}),
\end{displaymath}
where $\theta$ is the angle based subspace error, $\M{A}$ is the orthonormal basis of the first subspace, and $\M{B}$ is the orthonormal basis of the second subspace. With the simulation result in Fig \ref{simulation}(c), that subspace error is zero after  46\% sampling ratio for SSC-EWZF-OO, SSC-EWZF-OO-LASSO and TSC-EWZF algorithms, and  60\% for SSC-EWZF algorithm.  For any sampling ratio lower than 46\%, the subspace error for SSC-EWZF-OO is the lowest among the compared algorithms.

We note that the above distance  fails to measure the difference when a low dimensional subspace overlaps with a high dimensional subspace. We thus apply a Grassmann metric based  subspace error \cite{yeschubert} that accounts for the different dimension subspaces to consider the recovery of subspace dimensions. In this metric,  the distance between  two subspaces is given as
\begin{equation}
d({\bf A}, {\bf B}) = (|k-\l|\frac{\pi^2}{4} +\sum_{i=1}^{min(k,l)} \theta_i^2)^{1/2},
\end{equation}
where $k$ is the rank of {\bf A} which is the orthonormal basis of the first subspace, $l$ is the rank of ${\bf B}$ which is the  orthonormal basis of the second subspace,  and $\theta_i$'s are the principal angles \cite{soltanolkotabi2012} between the two subspaces ( $\theta_i$ is calculated by the angle between the $i_{th}$ column vector in ${\bf B}$ and the $i_{th}$ column vector in the basis vector that is obtained by projecting ${\bf A}$ onto ${\bf B}$). 
Grassmann metric based subspace error is shown in Fig \ref{simulation}(d)
and it hits zero at sampling ratio 0.55, 0.55, and 0.6 for SSC-EWZF-OO, SSC-EWZF and TSC respectively. We note that SSC-EWZF does not recover the subspace with the same dimension as the true subspace. Thus, SSC-EWZF recovers either a higher or lower dimensional subspace that overlaps with the true subspace. This  explains  zero orthonormal basis based subspace error for SSC-EWZF while the completion error for SSC-EWZF is not zero. Further,  subspace error at 50\% while there is no error in completion is since the completion error is not exactly zero, but below a threshold which can still cause error in subspace dimensions in some experiments.


\section{Conclusions}
This paper proposes two algorithms for sparse subspace clustering under missing data,  when data is assumed to come from a Union of Subspaces
(UoS) model, using a $\ell_1$ norm minimization based problems. Both the problems use combinations of entry-wise zero-filling and sparse subspace clustering.  Deterministic analysis of sufficient conditions when these algoirthms lead to correct clustering are presented. Extensive set of
simulation results for clustering as well as completion of data
under missing entries, under the UoS model are provided which demonstrate the effectiveness of the algorithm, and demonstrate that accurate
clustering does not imply accurate subspace identification.

We would like to mention that the notion of in-radius is related to the notion of \textbf{permeance} \cite{Lerman2012ArXiv} of data points in a given subspace, quantifying how well the data is distributed inside each subspace. In-radius can be thought of as a worst-case permeance that doesn't scale with the number of data points, while permeance scales with the number of data points and is more of an averaged criteria. Perhaps this is the reason that the primal-dual analysis of SSC under full observation is not able to support the empirical evidence that as the number of points per subspace increases the clustering error goes down dramatically. For subspace clustering such the effect of the number of data points was shown more explicitly in a recent paper \cite{GSC2014}. A connection between these two quantities, namely the in-radius and permeance for subspace clustering under missing data will be undertaken in a future work.


\appendices
 \section{Prior Results}

In this Section, we will present few results that will be used extensively in the proofs. 

{\lemma \label{lemma1} \cite{gritzmann1992inner} Let $r_j({\bf C})$ denote the radius of a largest j-ball contained in {\bf C}, $R_j({\bf C})$ denote the radius of a smallest j-ball containing ${\bf C}$, and ${\bf C}^o$ denote the polar of ${\bf C}$. If the body ${\bf C}$ is a subset of  Minkowski space of dimension $d$ and is symmetric about the origin and $1 \leq j \leq d$, then
\begin{equation}
r_j({\bf C}) R_j({\bf C}^o)=1 \quad \text{and} \quad R_j({\bf C}) r_j({\bf C}^o)=1
\end{equation}
}

{\lemma \label{lemma2} \cite{dim_red_sc_ext} Let ${\bf T}$ be a subset of the column indices of a given matrix ${\bf A}$. All solutions ${\bf c}^*$ of $P({\bf a},{\bf A})$ satisfy ${\bf C}^*_{\bar{{\bf T}}}=0$ if there exists a vector ${\bf c}$ such that ${\bf a}={\bf A}{\bf c}$ with support ${\bf S} \subset {\bf T}$ and a (dual certificate) vector $\bf{v}$ satisfying
\begin{equation} \label{eq: conditions}
{\bf A}_{\bf S}^\top {\bf v} =sign({\bf c}_{\bf S}), \quad \|{\bf A}_{{\bf T}\cap \bar{{\bf S}}}^\top {\bf v}\|_\infty \leq 1, \quad \|{\bf A}_{\bar{T}}^\top {\bf v}\|_\infty <1
\end{equation}
}
\section{Proof of Theorem \ref{theorem}: Deterministic Conditions for SSC-EWZF-OO}\label{apdx:oo}

The full data is located in $n$ dimensional space while under missing data, the algorithm proposed by SSC-EWZF-OO projects data onto low dimension ${\bf I}_{\Omega_i}$ according to represented data ${\bf X}_{\Omega_i}$. 
We apply SVD to analyze the changes of bases (to the projected space) and prove our Theorem \ref{theorem}  by showing that the consitions in the statement of the theorem gives the existence of a solution that satisfies the three dual certificate conditions \eqref{eq: conditions} in Lemma 2. Thus the coefficient vector $\bf{c}_i$ calculated from \eqref{eq: algo} has non-zeros entries only for the data points from the same subspace as ${\bf X}_{\Omega_i}$,  indicating correct clustering. 

Without loss of generality, let $\bf{c}$ denote the notation ${\bf c}_i$ in \eqref{eq: algo}.  According to the optimization problem \eqref{eq: algo} as proposed by SSC-EWZF-OO, with the SVD notation from \eqref{eq: X_omega}, the primal problem  is
\begin{equation}\label{eq: primal}
\begin{split}
& \text{(P)    }\min \|{\bf c}\|_1 \\
&  s.t. \quad {\bf Q}_i^{(\ell)} {\bf \Sigma}_i^{(\ell)}({\bf R}_i^{(\ell)})^\top {\bf a}_i^{(\ell)} \\
& = {\bf I}_{\Omega_i^{(\ell)}}[{\bf Q}_1^{(\ell)} {\bf \Sigma}_1^{(\ell)}({\bf R}_1^{(\ell)})^\top {\bf a}_1^{(\ell)} , ...,  {\bf Q}_{N_l}^{(\ell)} {\bf \Sigma}_{N_l}^{(\ell)}({\bf R}_{N_l}^{(\ell)})^\top {\bf a}_{N_l}^{(\ell)}] {\bf c},
\end{split}
\end{equation}
and the dual problem is
\begin{equation}\label{eq: dual}
\begin{split}
& \text{(D)    }\max \langle {\bf Q}_i^{(\ell)} {\bf \Sigma}_i^{(\ell)}({\bf R}_i^{(\ell)})^\top {\bf a}_i^{(\ell)} , {\bf v} \rangle \\
& s.t.  \quad \forall {j\neq i} \left| \left( {\bf I}_{\Omega_i^{(\ell)}} {\bf Q}_j^{(\ell)} {\bf \Sigma}_j^{(\ell)}({\bf R}_j^{(\ell)})^\top {\bf a}_j^{(\ell)} \right)^\top {\bf v}\right| \leq 1,
\end{split}
\end{equation}

Since ${\bf Q}_i^{(\ell)}$ is a unitary matrix, we have
\begin{equation}
{{\bf Q}_i^{(\ell)}}^\top {\bf Q}_i^{(\ell)} = {{\bf Q}_i^{(\ell)}} {{\bf Q}_i^{(\ell)}}^\top = {\bf I} \in \mathbb{R}^{n\times n}.
\end{equation}

Without changing the problem, we could insert an identity matrix ${{\bf Q}_i^{(\ell)}} {{\bf Q}_i^{(\ell)}}^\top$ into the constraint of the dual problem \eqref{eq: dual}, which gives
\begin{equation} \label{eq: thm_13}
\begin{split}
& \text{(D)    }\max \langle {\bf Q}_i^{(\ell)}{\bf \Sigma}_i^{(\ell)}({\bf R}_i^{(\ell)})^\top {\bf a}_i^{(\ell)} , {\bf v} \rangle \\
& \text{such that } \quad  \forall {j \neq i}\\
&  \| \left( {{\bf Q}_i^{(\ell)}} {{\bf Q}_i^{(\ell)}}^\top {\bf I}_{\Omega_i^{(\ell)}} {\bf Q}_j^{(\ell)} {\bf \Sigma}_j^{(\ell)}({\bf R}_j^{(\ell)})^\top {\bf a}_j^{(\ell)} \right)^\top {\bf v}\|_2 \leq 1,
\end{split}
\end{equation}

Separating ${\bf Q}_i^{(\ell)}$ out of the transpose in the constraint term in \eqref{eq: thm_13} changes the dual problem into
\begin{equation}
\begin{split}
& \text{(D)    }\max \langle {\bf \Sigma}_i^{(\ell)}({\bf R}_i^{(\ell)})^\top {\bf a}_i^{(\ell)} , ({\bf Q}_i^{(\ell)})^\top {\bf v} \rangle \quad \\
& \text{such that } \quad  \forall {j \neq i}\\
&\| \left( ({\bf Q}_i^{(\ell)})^\top {\bf I}_{\Omega_i^{(\ell)}} {\bf Q}_j^{(\ell)} {\bf \Sigma}_j^{(\ell)}({\bf R}_j^{(\ell)})^\top {\bf a}_j^{(\ell)} \right)^\top ({\bf Q}_i^{(\ell)})^\top {\bf v}\|_2 \leq 1,
\end{split}
\end{equation}

Let ${\V \lambda} = ({\bf Q}_i^{(\ell)})^\top {\bf v}$, for simplicity of notations. With the notations $\tilde{{\bf a}}_j^{(\ell)}$, $\tilde{{\bf A}}_{-i}^{(\ell)}$, and $\tilde{{\bf a}}_i^{(\ell)}$ as defined in \eqref{eq: base}, \eqref{eq: base_groups}, and \eqref{eq: notation_a}, the dual problem can be written  concisely as follows
\begin{equation} \label{eq: dual_transform}
\text{(D)    }\max \langle \tilde{{\bf a}}_i^{(\ell)} , {\V \lambda} \rangle \quad s.t. \quad \| (\tilde{{\bf A}}_{-i}^{(\ell)})^\top {\V \lambda} \|_\infty \leq 1.
\end{equation}

Now we will show the existence of a solution that satisfies the three conditions \eqref{eq: conditions} in Lemma 2 by three steps. Within each step, we will show that the selected solution satisfies each of the three conditions respectively. In our proof, ${\bf X}_{\Omega_i}^{(\ell)}$ and ${\bf I}_{\Omega_i^{(\ell)}} {\bf X}_{-i, \Omega}$ correspond to ${\bf a}$ and $\bf {A}$ in Lemma \ref{lemma2} respectively. 

{\bf Step 1 }

Let ${\bf S}$ be the support of the solution ${\bf c}$ in \eqref{eq: algo}, ${\bf v}_i^{(\ell)}$ be a solution to the dual problem \eqref{eq: dual}, and ${\bf X}_{{\bf S}, \Omega}$ be the data corresponding to ${\bf S}$ from ${\bf X}_{-i, \Omega}$. Then the objective function value of the primal problem \eqref{eq: primal} is
\begin{equation}
\|{\bf c}\|_1 =\|{\bf c_S}\|_1 =\langle {\bf c_S}, sign({\bf c_S}) \rangle.
\end{equation}
The objective function value of the dual problem \eqref{eq: dual} is
\begin{equation}
\langle {\bf X}_{\Omega_i}^{(\ell)}, {\bf v}_i^{(\ell)}\rangle
= \langle {\bf I}_{\Omega_i^{(\ell)}} {\bf X}_{{\bf S}, \Omega} {\bf c_S}, {\bf v}_i^{(\ell)}\rangle
=  \langle  {\bf c_S}, ({\bf I}_{\Omega_i^{(\ell)}} {\bf X}_{{\bf S}, \Omega})^\top {\bf v}_i^{(\ell)}\rangle,
\end{equation}

For linear programming problem, strong duality always holds \cite{boyd2004convex}, and thus $\langle {\bf c_S}, sign({\bf c_S}) \rangle = \langle  {\bf c_S}, ({\bf I}_{\Omega_i^{(\ell)}} {\bf X}_{{\bf S}, \Omega})^\top {\bf v}_i^{(\ell)}\rangle$. Since $sign(\bf{c_S})$ is the unique maximizer of $\max _{{\bf a}: \|{\bf a}\|_\infty \leq 1} \langle {\bf c_S}, {\bf a}\rangle $, we have
\begin{equation}
({\bf I}_{\Omega_i^{(\ell)}} {\bf X}_{{\bf S}, \Omega})^\top {\bf v}_i^{(\ell)}=sign({\bf c_S}).
\end{equation}
Thus, the solution satisfies the first condition of Lemma \ref{lemma2}.\\

{\bf Step 2 }

Since ${\bf v}_i^{(\ell)}$ is a solution to the dual problem, ${\bf v}_i^{(\ell)}$ has to satisfy the constraint of the dual problem \eqref{eq: dual}, and thus
\begin{equation}
\| ({\bf I}_{\Omega_i^{(\ell)}} {\bf X}_{-i, \Omega})^\top {\bf v}_i^{(\ell)} \|_\infty \leq 1.
\end{equation}
Thus, the solution satisfies the second condition of Lemma \ref{lemma2}.\\

{\bf Step 3 }

The constraint $\quad \| (\tilde{{\bf A}}_{-i}^{(\ell)})^\top {\V \lambda}_i^{(\ell)} \|_\infty \leq 1$ given by transformed dual problem in \eqref{eq: dual_transform} indicates
\begin{equation}\label{eq: 31}
\|{\V \lambda}_i^{(\ell)}\|_2 \leq R(\mc{P}^o (\tilde{\bf{A}}_{-i}^{(\ell)})),
\end{equation}
where $R(\mc{P}^o (\tilde{{\bf A}}_{-i}^{(\ell)}))$ is the radius of the smallest ball that contains $\mc{P}^o(\tilde{\bf{A}}_{-i}^{(\ell)})$.

Lemma \ref{lemma1} implies that $R(\mc{P}^o (\tilde{\bf{A}}_{-i}^{(\ell)})) = \frac{1}{r(\mc{P}(\tilde{{\bf A}}_{-i}^{(\ell)}))}$, where $r(\mc{P}(\tilde{{\bf A}}_{-i}^{(\ell)}))$ is the radius of the largest ball contained in $\tilde{{\bf A}}_{-i}^{(\ell)}$, thus from \eqref{eq: 31} we have
\begin{equation}\label{eq: fact}
\frac{1}{\|{\V \lambda}_i^{(\ell)}\|_2} \geq r(\mc{P}(\tilde{{\bf A}}_{-i}^{(\ell)})).
\end{equation}
By the assumption in the theorem  \eqref{eq:theorem_eq}, we have 
\begin{equation}\label{eq: fact1}
\left| 
\frac{({\V \lambda}_i^{(\ell)})^\top}{\|({\V \lambda}_i^{(\ell)})^\top\|_2} ({\bf Q}_i^{(\ell)})^\top {\bf I}_{\Omega_i^{(\ell)}} {\bf I}_{\Omega_j^{(k)}} {\bf U}^{(k)} {\bf a}_j^{(k)}
\right|
< r(P(\tilde{{\bf A}}_{-i}^{(\ell)})).
\end{equation}
From \eqref{eq: fact} and \eqref{eq: fact1}, we get
\begin{equation}
\left| 
\frac{({\V \lambda}_i^{(\ell)})^\top}{\|({\V \lambda}_i^{(\ell)})^\top\|_2} ({\bf Q}_i^{(\ell)})^\top {\bf I}_{\Omega_i^{(\ell)}} {\bf I}_{\Omega_j^{(k)}} {\bf U}^{(k)} {\bf a}_j^{(k)}
\right|
< \frac{1}{\|{\V \lambda}_i^{(\ell)}\|_2}.
\end{equation}
This is equivalent to
\begin{equation}\label{eq: dual_con3}
\left| 
({\V \lambda}_i^{(\ell)})^\top ({\bf Q}_i^{(\ell)})^\top 
{\bf I}_{\Omega_i^{(\ell)}} {\bf I}_{\Omega_j^{(k)}} {\bf U}^{(k)} {\bf a}_j^{(k)}
\right|
< 1.
\end{equation}

Since ${\V \lambda}_i^{(\ell)} = ({\bf Q}_i^{(\ell)})^\top {\bf v}_i^{(\ell)}$ and ${\bf I}_{\Omega_i^{(\ell)}} {\bf I}_{\Omega_j^{(k)}} {\bf U}^{(k)} {\bf a}_j^{(k)} ={\bf I}_{\Omega_i^{(\ell)}} {\bf X}_{\Omega_j}^{(k)}$, \eqref{eq: dual_con3} reduces to
\begin{equation}
\left| \langle {\bf v}_i^{(\ell)}, {\bf I}_{\Omega_i^{(\ell)}} {\bf X}_{\Omega_j}^{(k)} \rangle \right|< 1 \quad \forall _{k\neq l, j},
\end{equation}
thus showing that the solution satisfies the third condition in Lemma \ref{lemma2}.
\section{Proof of Theorem \ref{corollary3}: Deterministic Conditions for SSC-EWZF}\label{apdx:res}

The proof follows on the same lines as Theorem \ref{theorem}, by changing the $\tilde{\bf a}_j^{(k)} = ({\bf Q}_i^{(l)})^\top {\bf I}_{\Omega_i^{(l)}} {\bf V}_{\Omega_j}^{(k)} {\bf a}_j^{(k)}$ in SSC-EWZF-OO to $\tilde{\bf a}_j^{(k)} = ({\bf Q}_i^{(l)})^\top {\bf V}_{\Omega_j}^{(k)} {\bf a}_j^{(k)}$ in SSC-EWZF. 
Details are as follows. 

The primal problem in SSC-EWZF algorithm to be solved is
\begin{equation}\label{eq: P_res}
\begin{split}
& \text{(P)    }\min \|{\bf c}\|_1 \\
&  s.t. \quad {\bf Q}_i^{(\ell)} {\bf \Sigma}_i^{(\ell)}({\bf R}_i^{(\ell)})^\top {\bf a}_i^{(\ell)} \\
& = [{\bf Q}_1^{(\ell)} {\bf \Sigma}_1^{(\ell)}({\bf R}_1^{(\ell)})^\top {\bf a}_1^{(\ell)} , ...,  {\bf Q}_{N_l}^{(\ell)} {\bf \Sigma}_{N_l}^{(\ell)}({\bf R}_{N_l}^{(\ell)})^\top {\bf a}_{N_l}^{(\ell)}] {\bf c},
\end{split}
\end{equation}
and the corresponding dual problem is
\begin{equation}\label{eq: dual-38}
\begin{split}
& \text{(D)    }\max \langle {\bf Q}_i^{(\ell)} {\bf \Sigma}_i^{(\ell)}({\bf R}_i^{(\ell)})^\top {\bf a}_i^{(\ell)} , {\bf v} \rangle \\
& s.t.  \quad \forall {j\neq i} \left| \left( {\bf Q}_j^{(\ell)} {\bf \Sigma}_j^{(\ell)}({\bf R}_j^{(\ell)})^\top {\bf a}_j^{(\ell)} \right)^\top {\bf v}\right| \leq 1.
\end{split}
\end{equation}
An identity matrix ${\bf I} = {{\bf Q}_i^{(\ell)}}^\top {\bf Q}_i^{(\ell)}$ is inserted in the constraint in \eqref{eq: dual-38} to obtain
\begin{equation}\label{eq: dual-39} 
\begin{split}
& \text{(D)    }\max\langle {\bf Q}_i^{(\ell)}{\bf \Sigma}_i^{(\ell)}({\bf R}_i^{(\ell)})^\top {\bf a}_i^{(\ell)} , {\bf v} \rangle \\
& s.t. \quad  \forall {j \neq i} \| \left( {{\bf Q}_i^{(\ell)}} {{\bf Q}_i^{(\ell)}}^\top {\bf Q}_j^{(\ell)} {\bf \Sigma}_j^{(\ell)}({\bf R}_j^{(\ell)})^\top {\bf a}_j^{(\ell)} \right)^\top {\bf v}\|_2 \leq 1.
\end{split}
\end{equation}
Let $\V{\lambda}  = {{\bf Q}_i^{(\ell)} }^\top  {\bf v}$, then the dual problem \eqref{eq: dual-39} becomes
\begin{equation}\label{eq: dual-40} 
\begin{split}
& \text{(D) } \max  \langle {\bf \Sigma}_i^{(\ell)}({\bf R}_i^{(\ell)})^\top {\bf a}_i^{(\ell)} , \V{\lambda} \rangle \\
& s.t. \quad  \forall {j \neq i} \| \left( {{\bf Q}_i^{(\ell)}}^\top {\bf Q}_j^{(\ell)} {\bf \Sigma}_j^{(\ell)}({\bf R}_j^{(\ell)})^\top {\bf a}_j^{(\ell)} \right)^\top \V{\lambda}\|_2 \leq 1.
\end{split}
\end{equation}
Let  $\tilde{\bf b}_j^{(k)} = ({\bf Q}_i^{(\ell)})^\top {\bf V}_{\Omega_j}^{(k)} {\bf a}_j^{(k)}$. Note that $\tilde{\bf b}_i^{(\ell)} = {\bf \Sigma}_i^{(\ell)}({\bf R}_i^{(\ell)})^\top {\bf a}_i^{(\ell)} $. Further let $\tilde{\bf B}_{-i}^{(\ell)}$ be the $n\times (N_\ell-1)$ matrix with columns as $\tilde{{\bf b}}_j^{(\ell)}, j\neq i$, then \eqref{eq: dual-40} can be written as
\begin{equation}\label{eq: dual-41} 
 \text{(D) } \max \langle \tilde{\bf b}_i^{(\ell)} , \V{\lambda} \rangle
 s.t. \quad  \| \tilde{\bf B}_{-i}^{(\ell)} \V{\lambda}\|_\infty \leq 1.
\end{equation}

We now prove the theorem by showing that there exist a solution which satisfies the three conditions \eqref{eq: conditions} in Lemma 2 as in Appendix \ref{apdx:oo}. In this proof, $X_{\Omega_i}^{(\ell)}$ and ${\bf X}_{-i, \Omega}$ correspond to ${\bf a}$ and ${\bf A}$ respectively in Lemma \ref{lemma2}. 

{\bf Step 1}

Let ${\bf S}$ be the support of the solution ${\bf c}$, ${\bf v}_i^{(\ell)}$ be a solution to the dual problem \eqref{eq: dual-41}, and ${\bf X}_{{\bf S},\Omega}$ be the data corresponding to ${\bf S}$ from ${\bf X}_{-i, \Omega}$. The objective function value \eqref{eq: P_res} can be written as 
\begin{equation}
\|{\bf c}\|_1 =\|{\bf c}_{\bf S}\|_1 =\langle{\bf c}_{\bf S}, sign({\bf c}_{\bf S}) \rangle .
\end{equation}
The objective function value of the dual problem is 
\begin{equation} \label{eq: tp43}
\langle \tilde{{\bf b}}_i^{(\ell)},   {\bf Q}_i^{(\ell)} {\bf v}_i^{(\ell)}\rangle
=\langle {\bf X}_{\Omega_i^{(\ell)}},   {\bf v}_i^{(\ell)}  \rangle.
\end{equation}
From the constraints in \eqref{eq: P_res} and \eqref{eq: tp43}, we have
\begin{equation} \label{eq: tp44}
\langle  {\bf X}_{\Omega_i^{(\ell)}},   {\bf v}_i^{(\ell)}  \rangle =
\langle {\bf X}_{{\bf S},\Omega} {\bf c}_{\bf S}, {\bf v}_i^{(\ell)} \rangle =
\langle  {\bf c}_{\bf S}, ({\bf X}_{{\bf S},\Omega})^\top {\bf v}_i^{(\ell)} \rangle .
\end{equation}
For linear programming problem in \eqref{eq: P_res}, strong duality  holds, thus giving
\begin{equation}
\langle {\bf c}_{\bf S}, sign({\bf c}_{\bf S}) \rangle  = \langle  {\bf c}_{\bf S}, ({\bf X}_{{\bf S},\Omega})^\top {\bf v}_i^{(\ell)} \rangle .
\end{equation}
As $sign({\bf C_S})$ is the unique optimizer, thus
\begin{equation}
({\bf X}_{{\bf S},\Omega})^\top {\bf v}_i^{(\ell)}  =sign({\bf c}_{\bf S}).
\end{equation}
Thus, the solution satisfies the first condition in Lemma \ref{lemma2}.\\

{\bf Step 2}

Since ${\bf v}_i^{(\ell)}$ is a solution to the dual problem \eqref{eq: dual-38}, the constraint in  \eqref{eq: dual-38} has to be satisfied, and thus
\begin{equation}
\forall _{j\neq i} \left| ({\bf X}_j^{(\ell)}) ^\top {\bf v}_i^{(\ell)} \right| \leq 1.
\end{equation}
This is equivalent  to
\begin{equation}
\|{\bf X}_{-i, \Omega}^{(\ell)} {\bf v}_i^{(\ell)}\|_\infty \leq 1.
\end{equation}
Thus, the solution satisfies the second condition in Lemma \ref{lemma2}

{\bf Step 3}

The constraint given by \eqref{eq: dual-41} implies that the solution $\V{\lambda}_i^{(\ell)}$ satisfies
\begin{equation}\label{eq: tp3_1}
\|\V{\lambda}_i^{(\ell)}\|_2 \leq R(\mc{P}^o(\tilde{\bf B}_{-i}^{(\ell)})).
\end{equation}

From Lemmas \ref{lemma1} and \eqref{eq: tp3_1}, we obtain
\begin{equation}\label{eq: tp3_2}
\frac{1}{\|\V{\lambda}_i^{(\ell)}\|_2}  \geq \frac {1}{R(\mc{P}^o(\tilde{\bf B}_{-i}^{(\ell)}))} = r(\mc{P}(\tilde{\bf B}_{-i}^{(\ell)})).
\end{equation}

Based on the assumption in the statement of  Theorem \ref{corollary3}, we have
\begin{equation}\label{eq: tp3_3}
\left| 
\frac{({\V \lambda}_i^{(\ell)})^\top}{\|({\V \lambda}_i^{(\ell)})^\top\|_2} ({\bf Q}_i^{(\ell)})^\top  {\bf V}_{\Omega_j}^{(k)} {\bf a}_j^{(k)}
\right|
< 
r(\mc{P}(\tilde{{\bf B}}_{-i}^{(\ell)})).
\end{equation}

From \eqref{eq: tp3_2} and \eqref{eq: tp3_3}, we obtain
\begin{equation}\label{eq: tp3_4}
\left| 
{({\V \lambda}_i^{(\ell)})^\top} ({\bf Q}_i^{(\ell)})^\top  {\bf V}_{\Omega_j}^{(k)} {\bf a}_j^{(k)}
\right|
< 
1.
\end{equation}
This is equivalent to
\begin{equation}\label{eq: tp3_4}
\forall {k\neq \ell, j}
\left|
\langle
{\bf v}_i^{(\ell)} , {\bf X}_{\Omega_j}^{(k)} 
\rangle
\right|
< 
1,
\end{equation}
thus showing that the solution satisfies the third condition in Lemma \ref{lemma2}.

\section{Proof of Theorem \ref{thm:1}: Same Location Sampling Case} \label{apdx:same2}
We prove theorem  \ref{thm:1} by showing that the assumptions in the statement of  Theorem \ref{thm:1} indicates the existence of a solution that satisfies the three conditions in Lemma \ref{lemma2},
similar to that  in Appendix \ref{apdx:oo}

As the observation points are the same for each data, SSC-EWZF solves the following optimization problem
\begin{equation} \label{eq: primal1}
\text{(P)    }\min \|{\bf c}\|_1 \quad \text{s.t.} \quad {\bf I}_{\Omega} {\bf X}_i^{(\ell)} = {\bf I}_{\Omega} {\bf X}_{-i}^{(\ell)}{\bf c}.
\end{equation}

Without loss of generality, let ${\bf Y}_i^{(\ell)} = {\bf I}_{\Omega} {\bf X}_i^{(\ell)}$ and ${\bf Y}_{-i}^{(\ell)} = {\bf I}_{\Omega}{\bf X}_{-i}^{(\ell)}$. Then the primal problem in \eqref{eq: primal1} reduces to
\begin{equation}\label{eq: primal2}
\text{(P)    }\min \|{\bf c}\|_1 \quad \text{s.t.} \quad {\bf Y}_i^{(\ell)} = {\bf Y}_{-i}^{(\ell)}{\bf c}.
\end{equation}

The dual problem corresponding to \eqref{eq: primal2} is
\begin{equation}\label{eq: dual_v2}
\text{(D)    }\max \langle {\bf Y}_i^{(\ell)} , {\bf v}\rangle \quad \text{s.t.} \quad \|({\bf Y}_{-i}^{(\ell)})^\top {\bf v}\|_\infty \leq 1.
\end{equation}

Let ${\bf V}_\Omega^{(\ell)} ={\bf I}_\Omega {\bf U}^{(\ell)}$ denote the modified basis after considering the partial observation, which gives ${\bf Y}_i^{(\ell)} ={\bf V}_\Omega^{(\ell)}{\bf a}_i^{(\ell)}$ and ${\bf Y}_{-i}^{(\ell)} = {\bf V}_\Omega^{(\ell)}{\bf A}_{-i}^{(\ell)}$.  Now we will show the existence of a solution that satisfies the three conditions in Lemma \ref{lemma2} by three steps.  Note that ${\bf Y}_i^{(\ell)}$ and ${\bf Y}_{-i}^{(\ell)}$ are corresponding to ${\bf a}$ and ${\bf A}$ in Lemma \ref{lemma2} respectively. 

{\bf Step 1:}
Let S be the support of the solution, the primal problem  gives 
\begin{equation}\label{eq: v2_1}
\|{\bf c}\|_1  = \langle  {\bf c}_S, sign({\bf c}_S)\rangle,
\end{equation}
and  the dual problem gives
\begin{equation}\label{eq: v2_2}
\langle {\bf Y}_i^{(\ell)}, {\bf v}_i^{(\ell)}\rangle = \langle {\bf Y}_{-i, {\bf S}}^{(\ell)}{\bf c}_S, {\bf v}_i^{(\ell)}\rangle = \langle {\bf c}_S, ({\bf Y}_{-i, {\bf S}}^{(\ell)})^\top {\bf v}_i^{(\ell)}\rangle,
\end{equation}
 where ${\bf Y}_{-i, {\bf S}}^{(\ell)}$ is the support in ${\bf Y}_{-i}^{(\ell)}$. Since the strong duality holds for primal problem which is a linear programming problem, we have
\begin{equation}\label{eq: v2_3}
\|{\bf c}\|_1 =\langle {\bf Y}_i^{(\ell)}, {\bf v}_i^{(\ell)}\rangle.
\end{equation}

Since $sign({\bf c_S})$ is the unique optimizer, from \eqref{eq: v2_1}, \eqref{eq: v2_2} and \eqref{eq: v2_3} we have 
\begin{equation}
({\bf Y}_{-i,{\bf S}}^{(\ell)})^\top {\bf v}_i^{(\ell)} = sign({\bf c}_S).
\end{equation}
Thus, the solution satisfies the first condition in Lemma \ref{lemma2}.

{\bf Step 2:} Since the solution of the dual problem has to satisfy the constraint in the dual problem \eqref{eq: dual_v2}, we have
\begin{equation}
\|({\bf Y}_{-i}^{(\ell)})^\top {\bf v}_i^{(\ell)} \|_\infty \leq 1.
\end{equation}
Thus, the solution satisfies the second condition in Lemma \ref{lemma2}.

{\bf Step 3: }
From the dual problem \eqref{eq: dual_v2}, we have
\begin{equation}
\begin{split}
&OptSolD({\bf Y}_i^{(\ell)}, {\bf Y}_{-i}^{(\ell)}) \\
& = \arg \max _{\bf v} \langle {\bf V}_\Omega^{(\ell)} {\bf a}_i^{(\ell)}, {\bf v}\rangle \quad \text{s.t.}\quad \|({\bf V}_\Omega^{(\ell)}{\bf A}_{-i}^{(\ell)})^\top {\bf v}\|_\infty \leq 1 \\
& =  \arg \max _{\bf v} \langle  {\bf a}_i^{(\ell)}, ({\bf V}_\Omega^{(\ell)})^\top {\bf v} \rangle \quad \text{s.t.}\quad \|({\bf A}_{-i}^{(\ell)})^\top ({\bf V})_\Omega^{(\ell)})^\top {\bf v}\|_\infty \leq 1 .
\end{split}
\end{equation}

Let ${\bf {\bar{\V{\lambda}}}} = {{\bf V}_\Omega^{(\ell)}}^\top {\bf v}$ and ${\bar{\V{\lambda}}}_i^{(\ell)} \in OptSolD({\bf a}_i^{(\ell)}, {\bf A}_{-i}^{(\ell)})$, we see that  
\begin{equation}\label{eq: v2_9}
{\bf v}_i^{(\ell)} = ({{\bf V}_\Omega^{(\ell)}}^\top )^\dagger {\bf {\bar{\V{\lambda}}}}_i^{(\ell)},
\end{equation}
 is  a solution to the dual problem, where
$({{\bf V}_\Omega^{(\ell)}}^\top )^\dagger = {{\bf V}_\Omega^{(\ell)}} ({{\bf V}_\Omega^{(\ell)}}^\top {{\bf V}_\Omega^{(\ell)}})^{-1}$.

Since ${\bar{\V{\lambda}}}_i^{(\ell)} \in OptSolD({\bf a}_i^{(\ell)}, {\bf A}_{-i}^{(\ell)})$, with the result from Lemma \ref{lemma1}, we get
\begin{equation}\label{eq: v2-10}
\bar{\V{\lambda}}_i^{(\ell)} \leq R(\mc{P}^o({\bf A}_{-i}^{(\ell)})) =\frac{1}{r(\mc{P}({\bf A}_{-i}^{(\ell)}))}.
\end{equation}

From the assumptions  in the statement of Theorem \ref{thm:1}, we have
\begin{equation} \label{eq: v2_11}
\left| { ( \bar{ \V {\lambda}}_i^{(\ell)} )^\top }   ({\bf V}_\Omega^{(\ell)})^\dagger  {\bf V}_\Omega^{(k)} {\bf a}_j^{(k)} \right| 
< 
\| \bar{\V{\lambda}}_i^{(\ell)} \|_2 \times r(\mc{P} ({\bf A}_{-i}^{(\ell)}) ).
\end{equation}

From \eqref{eq: v2_9}, \eqref{eq: v2-10}, and \eqref{eq: v2_11}, we have
\begin{equation}
\left|
\langle {\bf v}_i^{(\ell)}, {\bf X}_j^{(k)}
\right|<1, \forall {\ell \neq k},
\end{equation}
thus showing that the solution satisfies the third condition in Lemma \ref{lemma2}.



\bibliographystyle{IEEEbib}
\bibliography{SSmC_Geom,SSmCbibliography,NewBib}

\end{document}